\documentclass[pra,aps,superscriptaddress,twocolumn,nofootinbib,longbibliography,allowtoday,a4paper]{quantumarticle}

\pdfoutput=1
\usepackage[T1]{fontenc}
\usepackage[numbers,comma,sort&compress]{natbib}
\usepackage{graphicx,
color,
amsmath,
amsfonts,
enumerate,
amsthm,
amssymb,
mathtools,
enumitem,
thmtools,
mathrsfs,
hyperref,
subfigure,
mathdots,
enumitem,
centernot,
bm,
soul,
bbm}
\usepackage[normalem]{ulem}
\usepackage{booktabs}
\usepackage{url}
\usepackage{float}
\usepackage{tkz-euclide}
\usepackage{csquotes}
\usepackage{bbold}
\usepackage{pgfplots}
\usepackage[capitalise, noabbrev]{cleveref}
\usepackage{dsfont}
\usepackage{cancel}
\usepackage{enumitem}
\usepackage{multirow}
\hypersetup{colorlinks=true,linkcolor=blue,citecolor=blue,urlcolor=blue}
\usepackage[margin=2cm]{geometry}
\usepackage[utf8]{inputenc}

\usetikzlibrary{decorations.pathmorphing}

\usepackage{listings}
\usepackage{color}

\usepackage{tikz,pgfplots}
\usetikzlibrary{quantikz}
\usetikzlibrary{arrows,shapes,positioning}
\usetikzlibrary{decorations.markings}

\theoremstyle{plain}

\newcommand{\loc}{L}

\newtheorem{theorem}{Theorem}

\newtheorem{prop}[theorem]{Proposition}
\newtheorem{obs}[theorem]{Observation}
\newtheorem{cor}[theorem]{Corollary}

\theoremstyle{remark}

\theoremstyle{definition}

\definecolor{codegreen}{rgb}{0,0.6,0}
\definecolor{codegray}{rgb}{0.5,0.5,0.5}
\definecolor{codepurple}{rgb}{0.58,0,0.82}
\definecolor{backcolour}{rgb}{0.95,0.95,0.92}

\lstdefinestyle{mystyle}{
    backgroundcolor=\color{backcolour},   
    commentstyle=\color{codegreen},
    keywordstyle=\color{magenta},
    numberstyle=\tiny\color{codegray},
    stringstyle=\color{codepurple},
    basicstyle=\footnotesize,
    breakatwhitespace=false,         
    breaklines=true,                 
    captionpos=b,                    
    keepspaces=true,                 
    numbers=left,                    
    numbersep=5pt,                  
    showspaces=false,                
    showstringspaces=false,
    showtabs=false,                  
    tabsize=2
}
 
\lstset{style=mystyle}
\usepackage{tikz}
\usetikzlibrary{arrows}

\DeclareMathOperator{\diag}{diag}

\usepackage{array}
\usepackage{rotating}
\usepackage{tabularx}
\usepackage{geometry}

{\protect}

\newcommand{\id}{\mathbf{1}}

\renewcommand{\emph}[1]{{\it #1}}
\newcommand{\tr}{\mathrm{Tr}}
\newcommand{\ketbra}[2]{|{#1}\rangle \! \langle {#2}|}

\newcommand{\R}{\mathds{R}} 
\def\E{ {\cal E} }

\newcommand{\dyad}[2]{| #1\rangle \langle #2|}



\usepackage {tikz}
\usetikzlibrary {positioning}

\usepackage{setspace}
\usepackage{orcidlink}

\usepackage{xcolor}

\newcommand*{\colorboxed}{}
\def\colorboxed#1#{%
  \colorboxedAux{#1}%
}
\newcommand*{\colorboxedAux}[3]{
  \begingroup
    \colorlet{cb@saved}{.}
    \color#1{#2}
    \boxed{
      \color{cb@saved}
      #3
    }
  \endgroup
}

\makeatletter
\renewcommand*\env@matrix[1][\arraystretch]{
  \edef\arraystretch{#1}
  \hskip -\arraycolsep
  \let\@ifnextchar\new@ifnextchar
  \array{*\c@MaxMatrixCols c}}
\makeatother

\begin{document}

\title{Certifying nonlocal properties of noisy quantum operations}

\author{Albert Rico${}^{\orcidlink{0000-0001-8211-499X}}$}
\affiliation{
Faculty of Physics, Astronomy and Applied Computer Science, Institute of Theoretical Physics, Jagiellonian University,
30-348 Krak\'{o}w, 
Poland}
\author{Mois\'es Bermejo Mor\'an${}^{\orcidlink{0000-0003-1441-0468}}$}
\affiliation{
Faculty of Physics, Astronomy and Applied Computer Science, Institute of Theoretical Physics, Jagiellonian University,
30-348 Krak\'{o}w, 
Poland}
\affiliation{
Laboratoire d’Information Quantique, Université libre de Bruxelles, Belgium
}
\author{Fereshte Shahbeigi${}^{\orcidlink{0000-0001-9991-6112}}$}
\affiliation{
Faculty of Physics, Astronomy and Applied Computer Science, Institute of Theoretical Physics, Jagiellonian University,
30-348 Krak\'{o}w, 
Poland}
\affiliation{RCQI, Institute of Physics, Slovak Academy of Sciences, D\'{u}bravsk\'{a} cesta 9, 84511 Bratislava, Slovakia}
\author{Karol \.{Z}yczkowski${}^{\orcidlink{0000-0002-0653-3639}}$}
\affiliation{
Faculty of Physics, Astronomy and Applied Computer Science, Institute of Theoretical Physics, Jagiellonian University,
30-348 Krak\'{o}w, 
Poland}
\affiliation{
Center for Theoretical Physics, Polish Academy of Science,
02-668 Warszawa, Poland}

\date{\today}
\maketitle

\begin{abstract}
Certifying quantum properties from the probability distributions they induce is an important task for several purposes. While this framework has been largely explored and used for quantum states, its extrapolation to the level of channels started recently in a variety of approaches. In particular, little is known about to what extent noise can spoil certification methods for channels.

In this work we provide a unified methodology to certify nonlocal properties of quantum channels from the correlations obtained in prepare-and-measurement protocols: our approach gathers fully and semi-device-independent existing methods for this purpose, and extends them to new certification criteria. In addition, the effect of different models of dephasing noise is analysed. Some noise models are shown to generate nonlocality and entanglement in special cases. In the extreme case of complete dephasing, the measurement protocols discussed yield particularly simple tests to certify nonlocality, which can be obtained from known criteria by fixing the dephasing basis. These are based on the relations between bipartite quantum channels and their classical analogues: bipartite stochastic matrices defining conditional distributions.

\end{abstract}


\section{Introduction}\label{sec:Int}

Quantum theory predicts phenomena that cannot be explained with local hidden-variable models~\cite{bell1964einstein, bell1966problem}. This property, known as nonlocality, is a resource for certain protocols in quantum information and computation~\cite{wiesner1983conjugate,pironio2010random}. These are implemented by using quantum states describing physical systems, which are processed and manipulated through quantum operations. Information is obtained by performing measurements, which provide  a probability distribution in terms of relative frequencies of outcomes. 

Therefore, it is of primal importance to identify nonlocal properties of both quantum states~\cite{werner1989quantum,barrett2002nonsequential} and the quantum operations acting on them~\cite{ChSteerNLbeyond_Hoban2018Games,rosset2018resource,CorrCompMeas_Selby2023Games}, from the nonlocal behaviors in the outcome probability distributions~\cite{RobustnessLNoise_Massar2002,QNLin2-3levRobustnessWN_Acin2002,BritoQuantBellNLtrDist_2018}. Bell games are linear tests for measurement protocols that identify nonlocality in quantum states, by analysing the correlations obtained through classical inputs and outputs~\cite{bell1964einstein,bell1966problem,BrunnerReviewBellNonloc_2014}. By using quantum inputs, these can be modified to detect nonlocality of all entangled states~\cite{BuscemiNonlocHV_2012}. Although these methods have been employed to study nonlocal properties of quantum states for several decades~\cite{bell1964einstein,bell1966problem,BuscemiNonlocHV_2012,BrunnerReviewBellNonloc_2014}, it was more recently that similar schemes were proposed to identify nonlocality in quantum channels~\cite{ChSteerNLbeyond_Hoban2018Games,rosset2018resource,ResNCchAssem_Zjawin2023Games,CorrCompMeas_Selby2023Games,schmid2020type,Liu_ResTherQChan2020}.

However, quantum operations in realistic setups suffer from the phenomenon of {\em decoherence}: due to imperfect isolation, 
internal degrees of freedom of the system are coupled with the environment
~\cite{peres1997quantum,Nielsen_Chuang_2010}. 
This typically leads to a full or partial loss of quantum properties~\cite{schlosshauer2007quantum}, and therefore identifying nonlocality in decoherent setups becomes a challenge~\cite{CharNonClassCorr_Rahman2022}. In particular, quantum systems are affected by {\em dephasing} noise~\cite{Breuer-opensysytems,wiseman2009qmeascontrol}, which acts on a quantum state by damping the off-diagonal terms in the density matrix while leaving intact the diagonal terms in a fixed basis~\cite{devetak2005capacitydeco,DArrigo_2007_CapDepChann,Bradler2010TroffCapHadChann}.

In this paper we provide a framework to certify nonlocal properties of quantum channels, in the presence of dephasing noise. 
To this end, we use 
measurement protocols depicted in Fig.~\ref{fig:DecoGames} to formulate (semi-)device-independent criteria to certify nonlocality of quantum channels. 
This approach includes and extends existing methods~\cite{ChSteerNLbeyond_Hoban2018Games,schmid2020type} and allows us to identify the role of dephasing noise in the nonlocal properties of quantum channels: while local dephasing cannot generate nonlocality, there are cases in which a global dephasing operation can generate both entanglement and nonlocality. 
Our analysis in the case of complete decoherence yields simple criteria to detect nonlocality in quantum channels through a geometric insight into the transition from quantum to classical operations.



\section{Preliminaries}
\subsection{Quantum and classical states and channels}

The set of quantum states $\rho$ acting on a Hilbert space of dimension $d$ is represented by $\mathcal{M}_d$. The set of classical states, namely the $d$-dimensional probability simplex, is denoted by $\Delta_d$. A classical channel transforming probability distributions is a \emph{stochastic matrix}, that is, a matrix with non-negative entries that sum up to one in each column. The set of stochastic matrices with the dimension of inputs and outputs equal to $d$ is represented by $\mathcal{S}_d$. Analogously, completely positive (CP) and trace-preserving (TP) linear maps, $\E_{A_0\rightarrow A_1}$, also known as quantum operations or \emph{quantum channels} \cite[§10.3]{bengtsson2012geometry}, transform quantum states. Their inputs and outputs are density matrices acting on a Hilbert space $\mathcal{H}_{A_0}$ and $\mathcal{H}_{A_1}$ with dimensions $d_{A_0}$ and $d_{A_1}$, respectively. When the input and output systems are held by the same party, we will denote by $A$ the joint system $A_0A_1$ and use the notation $\E_{A}$, or simply $\E$, if no confusion arises.

We denote by $J^{\mathcal{E}}_{A_0A_1}$ (or simply $J^\E$) the $d_{A_0} d_{A_1}\times d_{A_0} d_{A_1}$ \emph{Choi matrix} acting on $\mathcal{H}_{A_0}\otimes\mathcal{H}_{A_1}$, which defines a quantum channel $\E_A$ through the Choi-Jamio\l kowski isomorphism~\cite{choi1975completely,jamiolkowski1972linear},
\begin{equation}
\label{eq:CJ}
    \E_{A}(\rho_{A_0})=\tr_{A_0}(J_{A_0A_1}(\rho_{A_0}^T\otimes\id_{A_1}))\,.
\end{equation}
Here, $\id_{X}$ denotes the identity operator on $\mathcal H_{X}$ and the superscript $T$ the transpose operation.
The complete positivity and trace preserving (CPTP) conditions require that $J_{A_0A_1}\geq 0$ and
\begin{equation}\label{eq:TP}
    J_{A_0} := \tr_{A_1}(J_{A_0A_1})=\id_{A_0},
\end{equation} where we denote the partial trace by omitting the traced subsystems. 

Following~\cite{Supermaps_Chiribella2008,Superchannels_Gour2019}, we denote a \emph{quantum superchannel} as a bipartite channel $\Xi_{AB}$ with the special property that it transforms Choi matrices to Choi matrices. That is, a quantum channel $\E_{A_1\rightarrow B_0}$ is transformed according to
\begin{equation}
    \Xi_{A_0B_0\rightarrow A_1B_1}(\E_{A_1\rightarrow B_0})=\E'_{A_0\rightarrow B_1}.
\end{equation}
For the above to hold, the Choi matrix of $\Xi$ should satisfy~\cite{Superchannels_Gour2019}
\begin{subequations}\label{eq:superchannel}
\begin{align}
    J^\Xi&\geq 0\label{eq:SuperchPos}\\
    J^\Xi_{A_0B_0}&=\id_{A_0B_0}\label{eq:SuperchTP}\\
    J^\Xi_{AB_0}&=J^\Xi_{A_0A_1}\otimes\id_{B_0}/d_{B_0}\,.\label{eq:SuperchTPP}
\end{align}
\end{subequations}

Similarly, a classical superchannel $\Gamma$ \cite{ClassSupermaps_Hasenohrl2022}, transforming stochastic matrices of size $d_{B_0}\times d_{A_1}$ to those of size  $d_{B_1}\times d_{A_0}$, corresponds with a bipartite stochastic matrix of dimension $d_{A_1}d_{B_1}\times d_{A_0}d_{B_0}$. To be a valid superchannel, it has to additionally satisfy
\begin{equation}
\label{eq:sup-nonsignaling}
    \sum_{b_1}\Gamma_{a_1b_1,a_0b_0}=\sum_{b_1}\Gamma_{a_1b_1,a_0b'_0}\,.
\end{equation}
Physically, equations ~\eqref{eq:SuperchTPP} and \eqref{eq:sup-nonsignaling} impose \emph{nonsignaling conditions}:
the outputs cannot signal to the inputs; see also Eqs.~\eqref{eq:qNSab} and \eqref{eq:qNSba}.
The set of quantum and classical superchannels acting on $d$-dimensional channels are respectively denoted by $\Theta_d$ and $\mathcal{T}_d$.

\subsection{Nonlocality of bipartite distributions}
\label{sec:non-loc}

A bipartite probability distribution describes \emph{independent} events when it is a product of local distributions, $p(a,b|x,y) = p(a|x) p(b|y)$. The convex closure of these forms the \emph{Bell local} distributions (\loc{})~\cite{bell1966problem,BrunnerReviewBellNonloc_2014}, decomposing as
\begin{equation}
\label{eq:ProbSep}
    p(a,b|x,y)=\sum_\lambda p_\lambda \, p(a|x,\lambda) p(b|y,\lambda)\,,
\end{equation}
where $p_\lambda\geq 0$ and $\sum_\lambda p_\lambda=1$.
Otherwise, the distribution is called {\em Bell nonlocal}.

In quantum theory, probabilities $p(a,b|x,y)$ are obtained through Born's rule,
\begin{equation}\label{eq:ProbBorn}
 p(a,b|x,y) = \tr(M^{a|x}\otimes N^{b|y}\rho_{AB})
\end{equation}
for some shared state $\rho_{AB}$ and positive operator-valued measurements (POVM) $M^{x}$ and $N^{y}$. That is, $M^{a|x}, N^{b|y}\geq 0$ for each effect and $\sum_a M^{a|x} = \id_A$, $\sum_{b} N^{b|y} = \id_B$ \cite[§10.1]{bengtsson2012geometry}.
Distributions obtained from Eq.~\eqref{eq:ProbBorn} are called {\em quantum} (Q). 

A probability distribution is called {\em nonsignaling} (NS) if it has well-defined marginals, i.e., it satisfies
\begin{subequations}\label{eq:ProbNS}
\begin{align}
    p(a|x) =\sum_b p(a,b|x,y)& =  \sum_b p(a,b|x,y') \, ,\\
    p(b|y) =\sum_a p(a,b|x,y)& =  \sum_a p(a,b|x',y) \, . 
\end{align}
\end{subequations}
The sets \loc{}, Q and NS are convex and form a chain of inclusions
\begin{equation}
\operatorname{\loc{}}\subset\operatorname{Q}\subset\operatorname{NS}\, .\\
\end{equation}
Thus, they can be separated with linear functionals
\begin{equation}\label{eq:BellFunct}
\gamma(p):=\sum_{xyab}\gamma_{ab,xy}p(ab|xy)
\end{equation}
on the space of bipartite probability distributions, called \emph{Bell functionals}. 
One denotes $\gamma_{S}:=\max_{p\in S}\gamma(p)$ the maximum value of a Bell functional $\gamma$ over the set of distributions inside a selected subset $S$ (typically \loc{}, Q or NS). For example, the Clauser-Horne-Shimony-Holt (CHSH) \cite{CHSHineq1969} functional $\gamma_{ab,xy}=(-1)^{a+b}(-1)^{xy}$ satisfies $\gamma_{\operatorname{\loc{}}} = 2$, $\gamma_{\operatorname{Q}} = 2\sqrt{2}$ and $\gamma_{\operatorname{NS}} = 4$ \cite{tsirel1987quantum, BrunnerReviewBellNonloc_2014}.

\subsection{Nonlocality in quantum states}
A \emph{Bell protocol} is a device-independent local measurement protocol where a bipartite probability distribution is obtained from a bipartite quantum state through Eq.~\eqref{eq:ProbBorn}. Assuming quantum theory, the only subsets of bipartite distributions that can be obtained in this way are local (\loc{}) or quantum (Q). Separable quantum states can only produce local correlations, and therefore observing nonlocal quantum correlations is an indicator of the presence of entanglement~\cite{HyllusWitFromBell_2005,OtfriedED}. A quantum state $\rho_{AB}$ which cannot produce nonlocal correlations in the bipartition $A|B$ will be called a {\em local state} in such bipartition.

Remarkably, not all entangled states can produce correlations outside of \loc{} through 
Eq.~\eqref{eq:ProbBorn}, namely there exist local but entangled states~\cite{werner1989quantum,PPTnonloc_Vertesi2014}. However, a subtle difference exists between separable and entangled states in certain measurement protocols: it was shown by Buscemi that all entangled states $\rho_{AB}$ produce correlations that cannot be obtained from separable states in a semi-device-independent protocol with trusted quantum inputs~\cite{BuscemiNonlocHV_2012},
\begin{equation}\label{eq:BuscemiGame}
    p(ab|xy)=\tr(\tau^x_R\otimes\rho_{AB}\otimes\omega^y_S\cdot M^a_{AR}\otimes N^y_{BS})\,.
\end{equation}
Here the inputs are fixed sets of quantum states $\{\tau^x\}$ and $\{\omega^y\}$ used in ancillary systems $R$ and $S$. 
Distributions that cannot be obtained in this way from separable states are known as {\em Buscemi nonlocal}. For example, consider a state that is entangled but cannot produce distributions out of \loc{}  through Eq.~\eqref{eq:ProbBorn} ~\cite{werner1989quantum}. Through Eq.~\eqref{eq:BuscemiGame}, such a state produces distributions that are inside the local polytope \loc{}, but  are Buscemi nonlocal and therefore denote entanglement. 
Quantum measurement protocols obtained through Eq.~\eqref{eq:BuscemiGame} are called \emph{Buscemi protocols} to distinguish them from \emph{Bell protocols} in Eq.~\eqref{eq:ProbBorn}. 

\subsection{Nonlocality of bipartite channels}
\label{sec:non-loc-state-channel}
Bipartite classical channels are characterized by bipartite stochastic matrices~\cite{InterconvNLcorrClasChan_Jones2005,InfProcGPTClasChan_Barrett2007,OprFramewNLClasChan_Gallego2012,NLRTmeasClasChan_deVicente2014}, which are by definition isomorphic to bipartite correlations. That is to say,  by arranging a conditional probability distribution $p(ab|xy)$ in a matrix with columns labeled by $xy$, one obtains a bipartite stochastic matrix. 
This implies that the nonlocal structure of bipartite correlations defines accordingly a nonlocal structure in the set of bipartite stochastic maps. In particular, consider bipartite stochastic matrices of the form
\begin{equation}\label{eq:SepStoch}
S_{AB}=\sum_\lambda p_\lambda T_A^{\lambda}\otimes R_B^{\lambda}\,,
\end{equation}
where $T_A^\lambda\in\R^{d_{A_1}\times d_{A_0}}$ and $R_B^{\lambda}\in\R^{d_{B_1}\times d_{B_0}}$ are stochastic matrices. These stochastic matrices correspond with local probability distributions that decompose as Eq.~\eqref{eq:ProbSep}, and thus we call them local stochastic matrices. 

In the framework of quantum channel nonlocality, we distinguish the following three classes of channels.
\begin{itemize}
 \item {\em Local operations assisted by shared randomness} (LOSR): Operations $\E_{AB}$ which can be represented as the convex combination of {\em local operations} -- tensor product of quantum channels,  
\begin{equation}\label{eq:LOSR}
 \mathcal{E}_{AB}=\sum_\lambda p_\lambda \mathcal{E}_A^\lambda\otimes\mathcal{E}_B^\lambda\,,
\end{equation}
where $p_\lambda\geq 0$, $\sum_\lambda p_\lambda=1$ and $\mathcal{E}_{A(B)}^\lambda$ are quantum channels. These are the free operations in the resource theory of both Bell and Buscemi nonlocality~\cite{Geller_2014,deVicente_2014,schmid2020type,rosset-type-independent-prl} and are analogous to separable bipartite quantum states. Channels outside the set of LOSR are termed \emph{nonlocal} and can be compared to entangled states.

 \item {\em Localizable}, or {\em local operations and shared entanglement} (LOSE): Operations $\mathcal{E}_{AB}$ where Alice and Bob can do local operations assisted by a shared quantum state $\tau_{EF}$~\cite{BGNP01causal,schmid2021postquantum}. Namely,
 \begin{equation}\label{eq:Localizable}
     \mathcal{E}(\rho_{A_0B_0})\!=\!\tr_{E_1F_1}\! \big(\mathcal{E}_{AE}\otimes\mathcal{E}_{BF}(\rho_{A_0B_0}\otimes\tau_{E_0F_0})\!\big).
 \end{equation}
  
  \item  {\em Causal}, or {\em quantum nonsignaling (QNS)}: Operations $\mathcal{E}_{AB}$ where Bob's (Alice's) output cannot be influenced by Alice's (Bob's) local operations~\cite{BGNP01causal,Leung_NSCodesSDP_2015}. The marginals of their four-partite Choi matrices $J^\mathcal E_{A_0B_0A_1B_1}$ satisfy 
  \begin{subequations}
  \begin{align}
  J^{\mathcal{E}}_{A_0B_0B_1}&=\id_{A_0}\otimes J^{\mathcal{E}}_{B_0B_1}/d_{A_0},\label{eq:qNSab}\\ J^{\mathcal{E}}_{A_0A_1B_0}&= J^{\mathcal{E}}_{A_0A_1}\otimes\id_{B_0}/d_{B_0}\,.\label{eq:qNSba}
  \end{align}
  \end{subequations}
  Analogous to nonsignaling correlations in Eq.~\eqref{eq:ProbNS}, nonsignaling channels have also well-defined marginal channels proportional to $J^{\mathcal{E}}_{A_0A_1}$ and $J^{\mathcal{E}}_{B_0B_1}$.
\end{itemize}

We have the chain of inclusions (see Fig.~\ref{fig:SubsetsDeco})
\begin{equation}
\operatorname{LOSR} \subset \operatorname{LOSE} \subset \operatorname{QNS} \, .\label{subeq:chain}
\end{equation}

\subsection{The effect of dephasing noise in quantum states and channels}
Here we will consider quantum noise producing decoherence through a {\em dephasing channel}  $\mathcal{D}$, which leaves invariant the occupations of the input state in a preferred basis $\{\ket{i}\}$, $\bra{i}\mathcal{D}(\rho)\ket{i}=\bra{i}\rho\ket{i}$, while damping or inducing phases on the coherences. By imposing that a dephasing channel is a CPTP linear map, one verifies that it acts as~\cite{AlgebraDephasing_Kye1995,AlgebraDephasing2_Li1997,levick2017quantumPrivacySchurProdChannels,Puchala_Dephasing2021}
\begin{equation}\label{eq:DephChan}
    \mathcal{D}^G(\rho)=\rho\odot G
\end{equation}
for some Gram matrix $G$, i.e., a non-negative matrix with diagonal entries equal to one.
Here $\odot$ denotes the Schur product
(also called  entry-wise or Hadamard product). That is,  $\mathcal{D}^G(\rho)_{ij}=\rho_{ij}G_{ij}$ where
$|\mathcal{D}^G(\rho)_{ij}|=|\rho_{ij}G_{ij}|\leq|\rho_{ij}|$.

Dephasing noise can lead to the full or partial loss of the resource of coherence~\cite{ResCoh_Winter2016}, which plays a crucial role in most quantum protocols~\cite{Teleport_Bennett1993,EntSwapExp_Wei1998,BB84QKD_bennett1988}. The reason why the definition of the dephasing noise is basis dependent, 
is because quantum protocols take place by measuring the inputs and outputs in a fixed basis which is determined by physical principles, e.g. the energy eigenbasis in Hamiltonian dynamics or the basis where measurements are performed in a given protocol~\cite{Teleport_Bennett1993,EntSwapExp_Wei1998,BB84QKD_bennett1988, KamilCohQChan_2018}.

Similarly as for quantum states, the decoherent noise acting on a quantum channel $\mathcal{E}$ through dephasing leaves invariant the diagonal entries of the Choi matrix in a preferred basis and damps the cross-terms~\cite{Puchala_Dephasing2021}. A Choi matrix $J^\E_{A_0A_1}$ describing a channel $\E_A$ is converted into $J^\E_{A_0A_1}\odot G_{A_0A_1}$ where $G_{A_0A_1}$ is a Gram matrix with some additional structure~\cite{Puchala_Dephasing2021}. The most general form of such dephasing transformation is realized by pre- and post-processings along with memory \cite{Supermaps_Chiribella2008}. Assuming that there is no memory, dephasing noise acts as $\E':=\mathcal{D}^{G'}\circ\E\circ\mathcal{D}^{G}$ where $\circ$ means concatenation.  Then, the Choi matrix $J^\E_{A_0A_1}$ of a quantum channel $\E_A$ is mapped to~\cite{Puchala_Dephasing2021}
\begin{equation}\label{eq:GramNoMemoryDeph}
    J^{\mathcal E^\prime}_{A_0A_1}=J^\E_{A_0A_1}\odot (G_{A_0}\otimes G'_{A_1})\, .
\end{equation}

\subsection{Complete decoherence: from quantum to classical}
Consider now complete decoherence occurring by a completely dephasing channel $\mathcal{D}^\id$, acting as in Eq.~\eqref{eq:DephChan} where $G=\id$. Then, a quantum state $\rho\in\mathcal{M}_d$ loses all off-diagonal terms and becomes a probability vector $p=\diag(\rho)\in\Delta_d$. Similarly, when complete decoherence occurs in both the input and output systems of a quantum channel $\mathcal{E}_A$, the resulting operation is given by the diagonal of the Choi matrix \cite{Puchala_Dephasing2021}. That is, one has
\begin{equation}
    J^{\mathcal{D}^\id\circ\mathcal{\E}\circ\mathcal{D}^\id}= \iota [\diag(J^{\mathcal{E}})] \, .
    \label{eq:noise}
\end{equation}
where $\iota$ embeds a vector into the diagonal of a matrix. The CPTP conditions on $\mathcal{E}_{A}$ imply that the vector $\diag(J^{\mathcal{E}})$ can be reshaped into a $d_{A_1}\times d_{A_0}$ stochastic matrix $S^{\mathcal{E}}$, which is a linear map between the $d_{A_1}$- and $d_{A_0}$-dimensional probability simplices. The stochastic map $S^{\mathcal{E}}$, known as classical action~\cite{KamilCohQChan_2018,shahbeigi-log-convex,Puchala_Dephasing2021,QAdvSimStoch_Kamil2021,QEmbStoch_Fereshte2023}, will be called here the \emph{decoherent action} of $\mathcal{E}$~\cite{Puchala_Dephasing2021} to emphasize that it corresponds to a possibly nonlocal channel arising from the effects of decoherence. The decoherent action can be obtained from Kraus operators $K_i$ of $\mathcal{E}$ through $S^{\mathcal{E}}=\sum_i K_i\odot K^*_i$, where * denotes the complex conjugation~\cite{GeoQuantStates2006}. 

The effect of complete decoherence on quantum sets discussed above is summarized in Fig.~\ref{fig:SuperDeco}. These sets are obtained from the set $\mathcal{M}_{d^4}$ of four-partite states in different ways. Bipartite and monopartite states are obtained by partial trace. The probability simplices of corresponding sizes are obtained from these sets under decoherence. Bipartite and monopartite quantum channels are obtained from four-partite and bipartite states through the cross-section of Eq.~\eqref{eq:TP}. In particular, a generic quantum state $\sigma_{A_0B_0A_1B_1}$ with full-rank gives a generic quantum channel through
$J_{A_0B_0A_1B_1}=Y \sigma_{A_0B_0A_1B_1} Y$ where $X=\tr_{A_0A_1}\sigma_{A_0B_0A_1B_1}$
and $Y= \id \otimes 1/\sqrt{X}$ imposes the partial trace condition~\cite{kukulski2021generatingRandChann}. 
Then a classical map  (stochastic matrix) is obtained due to decoherence 
as in Eq.~\eqref{eq:noise}
from the diagonal entries of the Choi matrix $J$
describing the bipartite channel. The transition from channels to superchannels can be seen as an extended Choi-Jamio{\l}kowski isomorphism, as superchannels are isomorphic to one-way quantum nonsignaling channels through Eq.~\eqref{eq:SuperchTPP}. These decohere to one-way nonsignaling distributions through Eq.~\eqref{eq:sup-nonsignaling}. 

Thus one identifies classical subsets embedded in quantum subsets at different levels: general distributions, conditional distributions and nonsignaling distributions are embedded in quantum states, channels and superchannels. In this way, recent studies on the geometry of the 
 sets of local and nonlocal correlations~\cite{GeoQCorr_Goh2018,CharNonClassCorr_Rahman2022} are directly linked to the
 research on geometry of quantum states~\cite{GeoQuantStates2006,bengtsson2012geometry,Eltschka2021shapeofhigher}
 and geometry of quantum entanglement
\cite{kus2001geometry,pucha2012restricted}.
Furthermore, 
analysis of the relative volume of the set
of bipartite local correlations \cite{cabello2005how,wolfe2012quantum,duarte2018concentration}
can be compared with the problem
of estimating the relative volume of the set
of bipartite separable states \cite{karol1998volume,slater1999apriori,milz2014volumes,lovas2017invariance,sauer2021entanglement}.

\section{Channel nonlocality from local measurements}
\begin{figure*}[tbp]
    \centering
    \includegraphics[scale=0.85]{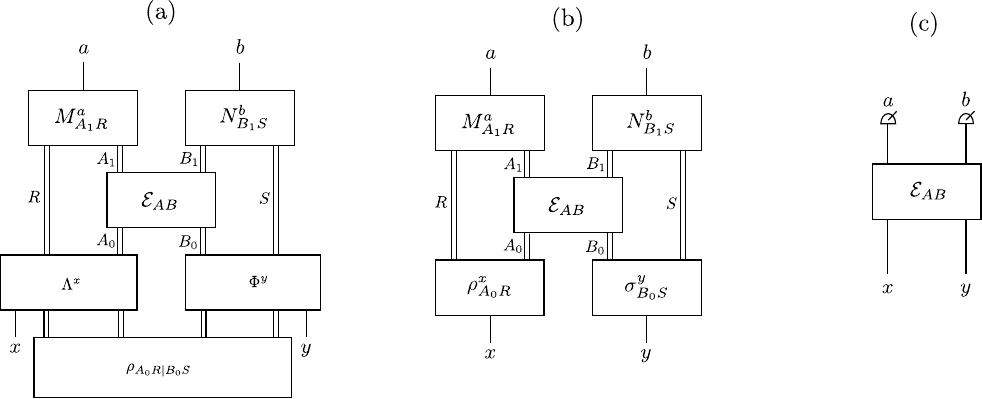}
    \caption{{\bf Schemes to obtain bipartite conditional probability distributions} (bipartite stochastic matrices) from bipartite channels: (a) A shared entangled state is processed with local operations (with ancillary systems) to produce the inputs which undergo the channel $\E$ to be analysed, and then local measurements are performed to obtained the probabilities of Eq.~\eqref{eq:CorrGameGen}. (b) Product states prepared locally undergo a bipartite channel and then local measurements are performed to obtain the distribution in Eq.~\eqref{eq:CorrGame}. This case can be seen as bipartite distributions being obtained after local superchannels are performed in the form of $\Xi_A\otimes\Xi_B[\E_{AB}]$. (c) A bipartite distribution in Eq.~\eqref{eq:CAchannel} is obtained with inputs in product states spanning a preferred basis, which undergo a bipartite channel and are then projective-measured in the same preferred basis.}
    \label{fig:DecoGames}
\end{figure*}

\subsection{Local measurement protocols with shared entanglement}

Bipartite distributions can be obtained from bipartite quantum channels by choosing input states and measuring their conditional outputs~\cite{ChSteerNLbeyond_Hoban2018Games,CorrCompMeas_Selby2023Games,ResNCchAssem_Zjawin2023Games,ResThNCcomcauAssem_Zjawin2023Games}, as shown in Fig.~\ref{fig:DecoGames}. In this spirit, consider the probability distribution obtained by the measurement protocol of Fig.~\ref{fig:DecoGames}~(a),
\begin{align}
    p(a,b|x,&y)=\tr\Big (\big ( M^a_{RA_1}\otimes N^b_{SB_1}\big )\cdot\nonumber\\
    &\E_{AB}\otimes\text{id}_{RS}[\Lambda_{A_0R}^x\otimes\Phi_{B_0S}^y(\rho_{ABRS})]\Big )\,.\label{eq:CorrGameGen}
\end{align}
Here, $\text{id}_{RS}$ denotes the identity channel in $RS$, $\mathcal E$, $\Lambda^x$ and $\Phi^y$ generic quantum channels, $M^a$ and $N^b$ POVMs and $\rho$ a quantum state. The subindices indicate their respective systems.
The following generalizes existing results in detecting non-LOSR quantum operations~\cite{ChSteerNLbeyond_Hoban2018Games,schmid2020type},  allowing for nonproduct input states --Fig.~\ref{fig:DecoGames} (a).
\begin{prop}\label{prop:LocconditionsGen}
Let $p(a, b|x, y)$ be a probability distribution obtained from a bipartite quantum channel $\E_{AB}$ through Eq.~\eqref{eq:CorrGameGen}. For any input channels $\Lambda^x$ and $\Phi^y$ and for any measurements $M^a$ and $N^b$, the following holds:
\begin{enumerate}[label=(\roman*)]
    \item If $\E_{AB}$ is in LOSR, then for any input state $\rho_{AR|BS}$ which is local in the bipartition $AR|BS$, $p(a,b|x,y)$ is in \loc{}.
    \item If $\E_{AB}$ is in LOSE, then for any shared state $\rho$, $p(a,b|x,y)$ is in Q.
    \item If $\E_{AB}$ is in QNS, then for any shared state $\rho$, $p(a,b|x,y)$ is in NS.
\end{enumerate}
\end{prop}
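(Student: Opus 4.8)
The plan is to reduce each case to a statement about the correlations produced by a \emph{fixed} shared state under local measurements, by folding the input channels $\Lambda^x$, $\Phi^y$ and the output measurements into a single effective POVM on each party's input systems through the Heisenberg picture. Since $\Lambda^x_{A_0R}$ and the local pieces of $\E_{AB}$ are CPTP, their adjoints are completely positive and unital, so pulling the measurement operator $M^a_{RA_1}$ back through them produces a genuine POVM $\{\tilde M^{a|x}\}_a$ on Alice's input, and likewise $\{\tilde N^{b|y}\}_b$ on Bob's; the normalization $\sum_a \tilde M^{a|x}=\id$ follows from unitality of the adjoints together with $\sum_a M^a=\id$. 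Once this reduction is in place, the three items follow from the defining structure of LOSR, LOSE and QNS. The recurring difficulty is bookkeeping: tracking which system each operation acts on and which operations commute across the $A|B$ cut.

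For (i) I insert the LOSR decomposition \eqref{eq:LOSR}, $\E_{AB}=\sum_\lambda p_\lambda\,\mathcal{E}_A^\lambda\otimes\mathcal{E}_B^\lambda$, so that for each fixed $\lambda$ the channel factorizes across $AR|BS$. Absorbing $\mathcal{E}_A^\lambda$ and $\Lambda^x$ into $M^a$ yields a $\lambda$-dependent POVM $\{\tilde M^{a|x}_\lambda\}$ on Alice's input, and similarly $\{\tilde N^{b|y}_\lambda\}$ for Bob, giving
\[
 p(a,b|x,y)=\sum_\lambda p_\lambda\,\tr\big(\tilde M^{a|x}_\lambda\otimes\tilde N^{b|y}_\lambda\,\rho\big).
\]
For each fixed $\lambda$ this is a distribution produced by $\rho$ under local measurements; since $\rho$ is local in $AR|BS$ by hypothesis, it lies in $\loc{}$. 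The final $p$ is a convex combination over $\lambda$ of $\loc{}$ distributions, and as $\loc{}$ is convex it is itself in $\loc{}$.

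For (ii) I use the LOSE form \eqref{eq:Localizable}, in which Alice and Bob act with local channels $\mathcal{E}_{AE}$, $\mathcal{E}_{BF}$ on a shared state $\tau_{E_0F_0}$ and discard $E_1F_1$. The decisive step is to regard $\rho\otimes\tau_{E_0F_0}$ as a \emph{single} bipartite state across the cut $\{A_0,R,E_0\}\,|\,\{B_0,S,F_0\}$; this state may be entangled, which is exactly what $\mathrm{Q}$ permits. Everything else on Alice's side --- the input channel $\Lambda^x$, the local channel $\mathcal{E}_{AE}$, the partial trace over $E_1$, and the measurement $M^a$ --- is local and is absorbed into one effective POVM $\{\hat M^{a|x}\}$ by the Heisenberg argument above, where discarding $E_1$ corresponds to appending $\id_{E_1}$ to $M^a$ before pulling back; symmetrically for Bob. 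Then $p(a,b|x,y)=\tr\big(\hat M^{a|x}\otimes\hat N^{b|y}\,(\rho\otimes\tau)\big)$ has the Born form \eqref{eq:ProbBorn}, hence lies in $\mathrm{Q}$.

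For (iii) I verify the marginals directly, i.e.\ the nonsignaling conditions \eqref{eq:ProbNS}. Summing over $b$ replaces Bob's measurement by $\sum_b N^b_{SB_1}=\id_{SB_1}$, which amounts to tracing out $B_1$ and $S$. The QNS condition \eqref{eq:qNSba} states precisely that the marginal channel $\tr_{B_1}\E_{AB}$ is insensitive to the $B_0$ input, factoring as $\mathcal{E}^{\mathrm{red}}_{A_0\to A_1}\circ\tr_{B_0}$ (the product form with $\id_{B_0}/d_{B_0}$ is the Choi matrix of such a $B_0$-discarding channel). Feeding this through, and using that $\Phi^y$ is trace preserving on $B_0S$ so that $\tr_{B_0S}\circ\Phi^y=\tr_{B_0S}$, all $y$-dependence drops out, leaving $\sum_b p(a,b|x,y)$ a function of $x$ and $a$ only; the symmetric computation with \eqref{eq:qNSab} makes $\sum_a p(a,b|x,y)$ independent of $x$, so $p\in\mathrm{NS}$. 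I expect the main obstacle here to be this last bookkeeping: one must confirm that the QNS marginal condition may be applied \emph{after} the ancillary channels $\Lambda^x,\Phi^y$ and the identities on $R,S$ have been inserted, which is exactly where the commutation of $\Phi^y$ with $\tr_{B_0}$ and the trace-preservation of the input channels are used.
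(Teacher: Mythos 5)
Your proposal is correct and follows essentially the same route as the paper's proof: for (i) and (ii) you absorb the input channels and the local pieces of $\E_{AB}$ (with $\id_{E_1}$, $\id_{F_1}$ appended in the LOSE case) into effective POVMs via the completely positive unital adjoints, reducing to a Bell game on $\rho_{ARBS}$, respectively a Born-rule expression on $\rho_{ARBS}\otimes\tau_{EF}$, and for (iii) you check the marginals directly from the QNS Choi conditions \eqref{eq:qNSab}--\eqref{eq:qNSba} together with trace preservation of $\Lambda^x$, $\Phi^y$. The only cosmetic differences are that you carry $\lambda$-dependent POVMs and invoke convexity of \loc{} explicitly where the paper restricts to extremal product channels, and that in (iii) you phrase the argument through the $B_0$-discarding marginal channel rather than the paper's explicit Choi-matrix computation in Eq.~\eqref{eq:ProofNSProp}; these are the same argument in substance.
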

\begin{proof}
\textbf{Proof of (i).} For any channel $\E$, we denote its adjoint by $\E^\dagger$, which satisfies $\tr\left(\E[X]Y\right)=\tr\left(\E^\dagger[Y]X\right)$ for any $X$ and $Y$. It is enough to consider an extremal LOSR operation $\E_{AB}=\E_A\otimes\E_B$, since the rest of the proof follows by convex combinations. Define the operators
\begin{subequations}\label{eq:update-povmsLOSR}
\begin{align}
\tilde M^{a|x}:=&{\Lambda^x_{AR}}^\dagger\circ(\E^\dagger_{A}\otimes\text{id}_R)[M^a_{RA_1}]\, ,\\
\tilde N^{b|y}:=&{\Phi^x_{BS}}^\dagger\circ(\E^\dagger_{B}\otimes\text{id}_S)[N^b_{SB_1}]\, .
\end{align}
\end{subequations}
Since the adjoint $\E^\dagger$ of a CPTP map $\E$ is completely positive and unital, i.e., it satisfies $\E^\dagger(\id)=\id$~\cite{GeoQuantStates2006}, the operators $\Tilde{M}^{a|x}$ ($\Tilde{N}^{b|y}$)  are positive semidefinite and their summation over $a$ ($b$) is the identity for any $x$ ($y$). Thus, we can rewrite Eq.~\eqref{eq:CorrGameGen} as
\begin{equation}
    p(a,b|x,y)=\tr\Big (\rho_{ABRS}\cdot\tilde{M}^{a|x}_{RA_1}\otimes \tilde{N}^{b|y}_{SB_1}\Big )\label{eq:CorrGameGenLocProofDual}\,,
\end{equation}
which is a standard Bell game. Now it becomes evident by definition, that if the state $\rho_{ABRS}$ is local in the bipartition $AR|BS$, then a local distribution is obtained.

\textbf{Proof of (ii).}
Consider an LOSE channel, which factorizes as in Eq.~\eqref{eq:Localizable}. 
Similarly as in the proof of item (i), define the measurement operators
\begin{subequations}\label{eq:update-povms-prop2Gen}
\begin{align}
\tilde M^{a|x}:=&({\Lambda^x_{AR}}^\dagger\otimes\text{id}_{E})\circ\\
& (\E^\dagger_{AE}\otimes\text{id}_R)[M^a_{RA_1}\otimes\id_{E_1}]\, ,\nonumber\\
\tilde N^{b|y}:=&({\Phi^x_{BS}}^\dagger\otimes\text{id}_{F})\circ\\
& (\E^\dagger_{BF}\otimes\text{id}_S)[N^b_{SB_1}\otimes\id_{F_1}]\, .\nonumber
\end{align}
\end{subequations}
Using Eqs.~\eqref{eq:update-povms-prop2Gen}, one verifies that Eq.~\eqref{eq:CorrGameGen} reads
\begin{align}
    p(a,b|x,y) & = \tr \big ( \tilde M^{a|x}_{ARE} \otimes \tilde N^{b|y}_{BSF} \cdot \rho_{ARBS}\otimes\tau_{EF} \big )\,.\label{eq:p(abxy)Localizable}
\end{align}
This proves that by applying an LOSE channel one can only recover quantum correlations. 

\textbf{Proof of (iii).} By imposing in Eq.~\eqref{eq:CorrGameGen} that $\E$ satisfies $J^\E_{A_0B_0B_1}=\id_{A_0}\otimes J_{B_0B_1}/d_{A_0}$, we obtain
\begin{align}
    \sum_ap(a,b|x,y)=&\sum_a\tr\big ((\id_{B_0}\otimes N^b_{SB_1})\cdot\nonumber\\
    &(J_{B_0B_1}\otimes\id_S)(\rho^y_{B_0S}\otimes\id_{B_1})\big )\label{eq:ProofNSProp}
\end{align}
where $\rho^y_{B_0S}:=\Phi^y(\tr_{A_0R}(\rho_{ARBS}))$, which does not depend on $x$. Similarly, by imposing $J^\E_{A_0A_1B_0}= J_{A_0A_1}\otimes\id_{B_0}/d_{B_0}$ we obtain that $\sum_bp(a,b|x,y)$ does not depend on $y$ and therefore $p(a,b|x,y)$ is a nonsignaling distribution.
\end{proof}
Note that Proposition~\ref{prop:LocconditionsGen} provides a criterion to detect quantum channels outside of the sets LOSR, LOSE and QNS which is fully device-independent, in the sense that none of the quantum devices (the shared state $\rho$, the input channels $\Lambda^x$ and $\Phi^y$, and the measurements $M^a$ and $N^b$) is trusted. 
\begin{figure}[!ht]
    \centering
    \includegraphics[scale=0.025]{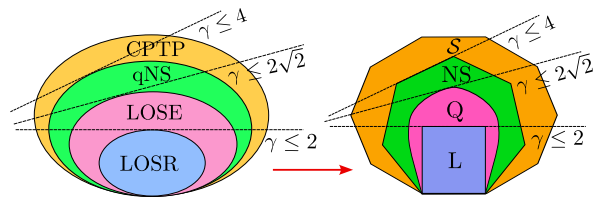}
    \caption{{\bf Subsets of distributions obtained from bipartite operations.} {\em Left.} Inclusions of the following subsets of quantum channels (CPTP): quantum nonsignaling (QNS), local operations assisted by the shared resources of entanglement (LOSE) and randomness (LOSR). {\em Right.} Inclusions of the subsets of the polytope of bipartite stochastic matrices ($\mathcal{S}$): nonsignaling (NS), quantum (Q) and local (\loc{}) correlations. 
    Bell functionals $\gamma$ bounding each subset of bipartite distributions bound the corresponding subsets of bipartite channels, through Prop.~\ref{prop:LocconditionsGen}, Cor.~\ref{prop:Locconditions} and Prop.~\ref{prop:CAiffCorrSubs}. Here we depict the linear cuts delimited by some Bell functional.
    }
    \label{fig:SubsetsDeco}
\end{figure}

\subsection{Local measurement protocols without entanglement}
Consider the protocol in Fig.~\ref{fig:DecoGames}~(b),
\begin{align}
p(a,b|x,y)=&\tr\Big (\big ( M^a_{RA_1}\otimes N^b_{SB_1}\big )\cdot\nonumber\\
&\E_{AB}\otimes\text{id}_{RS}(\rho_{A_0R}^x\otimes\sigma_{B_0S}^y)\Big )\,,\label{eq:CorrGame}
\end{align}
which was introduced in~\cite{ChSteerNLbeyond_Hoban2018Games}. This can be obtained from Eq.~\eqref{eq:CorrGameGen} by considering a product state $\rho_{ARBS}=\dyad{00}{00}_{AR}\otimes\dyad{00}{00}_{BS}$, so that the channels $\Lambda_{AR}^x$ and $\Phi_{BS}^y$ act as state preparation channels producing input states $\rho_{AR}^x$ and $\sigma_{BS}^y$. LOSR channels produce local probability distributions from Eq.~\eqref{eq:CorrGame}, as shown in~\cite{ChSteerNLbeyond_Hoban2018Games}. In items $(ii)$ and $(iii)$ of the following Corollary we extend this result to other classes of channels.
\begin{cor}\label{prop:Locconditions}
Let $p(a, b|x, y)$ be a probability distribution obtained from a bipartite quantum channel $\E_{AB}$ through Eq.~\eqref{eq:CorrGame}. The following holds, for any choice of input states and measurements:
\begin{enumerate}[label=(\roman*)]
    \item If $\E_{AB}$ is LOSR, then  $p(a,b|x,y)$ is in \loc{}.
    \item If $\E_{AB}$ is LOSE, then  $p(a,b|x,y)$ is in Q.
    \item If $\E_{AB}$ is QNS, then  $p(a,b|x,y)$ is in NS.
\end{enumerate}
\end{cor}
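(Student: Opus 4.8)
The plan is to obtain this Corollary as a direct specialization of Proposition~\ref{prop:LocconditionsGen}, using the observation that the entanglement-free protocol of Eq.~\eqref{eq:CorrGame} is a particular instance of the shared-entanglement protocol of Eq.~\eqref{eq:CorrGameGen}. Concretely, I would fix the shared state to the product state $\rho_{ARBS}=\dyad{00}{00}_{AR}\otimes\dyad{00}{00}_{BS}$ and let the input channels $\Lambda^x_{A_0R}$ and $\Phi^y_{B_0S}$ act as state-preparation (``replacer'') channels, that is, maps that discard this fixed input and output the states $\rho^x_{A_0R}$ and $\sigma^y_{B_0S}$ respectively. Substituting these choices into Eq.~\eqref{eq:CorrGameGen} reproduces Eq.~\eqref{eq:CorrGame} exactly, so every distribution arising in the present Corollary is also an instance covered by Proposition~\ref{prop:LocconditionsGen}.

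With this reduction in place, the three items follow immediately. For item (i), the chosen shared state is a product state, hence local in the bipartition $AR|BS$; item (i) of Proposition~\ref{prop:LocconditionsGen} then guarantees $p(a,b|x,y)\in\loc{}$ whenever $\E_{AB}$ is LOSR, recovering the known result of~\cite{ChSteerNLbeyond_Hoban2018Games}. For items (ii) and (iii) --- the new extensions --- the corresponding conclusions of Proposition~\ref{prop:LocconditionsGen} hold for \emph{any} shared state, and therefore in particular for the product state above: an LOSE channel yields a distribution in Q, and a QNS channel yields a distribution in NS.

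The only point requiring a line of justification is that this specialization is admissible, namely that state-preparation maps are genuine CPTP channels eligible to play the roles of $\Lambda^x$ and $\Phi^y$, and that a fixed product state is a legitimate shared state; both facts are immediate. I therefore do not expect a substantive obstacle: the technical content --- the dual-POVM rewriting used in item (i) and the marginal computation establishing the nonsignaling conditions in item (iii) --- has already been discharged in the proof of Proposition~\ref{prop:LocconditionsGen}, and the Corollary simply inherits it under this benign restriction of the input data.
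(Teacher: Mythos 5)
Your proof is correct, and it matches the paper in spirit but not in every item. The reduction you use --- fixing the shared state to $\dyad{00}{00}_{AR}\otimes\dyad{00}{00}_{BS}$ and letting $\Lambda^x$, $\Phi^y$ be preparation channels --- is stated verbatim in the paper in the text introducing Eq.~\eqref{eq:CorrGame}, and the paper proves item (iii) exactly as you do, by direct appeal to Proposition~\ref{prop:LocconditionsGen}. It diverges on the other two items: for (i) it simply cites the known result of~\cite{ChSteerNLbeyond_Hoban2018Games} rather than invoking Proposition~\ref{prop:LocconditionsGen}(i), and for (ii) it reruns the adjoint-absorption computation from scratch, defining operators $\tilde M^{a|x}$ and $\tilde N^{b|y}$ in which the input states $\rho^x_{RA_0}$, $\sigma^y_{SB_0}$ are traced out, so that $p(a,b|x,y)=\tr\big(\tilde M^{a|x}_E\otimes\tilde N^{b|y}_F\cdot\tau_{EF}\big)$. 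Your uniform specialization is more economical and fully sufficient for the corollary as stated: the fixed product state is manifestly local in the bipartition $AR|BS$, so Proposition~\ref{prop:LocconditionsGen}(i) applies, and items (ii)--(iii) of that proposition hold for arbitrary shared states, hence in particular for yours. What the paper's extra computation in item (ii) buys, and your route does not directly give, is the sharper structural fact that for LOSE channels the correlations of Eq.~\eqref{eq:CorrGame} are Bell correlations of the resource state $\tau_{EF}$ \emph{alone}, rather than of $\rho_{ARBS}\otimes\tau_{EF}$ as in Proposition~\ref{prop:LocconditionsGen}; the paper exploits this refinement in the remarks immediately following the corollary, where it argues that an LOSE realization with a Bell-nonlocal $\tau_{EF}$ can produce nonlocal correlations.
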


\begin{proof}
\textbf{Proof of (i).} Notice that probability distributions in Eq.~\eqref{eq:CorrGame} from LOSR channels give local probability distributions. This was noted earlier in~\cite[see Fig. 8]{ChSteerNLbeyond_Hoban2018Games}.

\textbf{Proof of (ii).}
Similarly as in Prop.~\ref{prop:LocconditionsGen}, define
\begin{subequations}\label{eq:update-povms}
\begin{align}
\tilde M^{a|x}=\tr_{RA_0}\big(&(\rho^{x}_{RA_0}\otimes\id_{E_0}).\\
& (\E^\dagger_{AE}\otimes\text{id}_R)[M^a_{RA_1}\otimes\id_{E_1}]\big)\, ,\nonumber\\
\tilde N^{b|y}=\tr_{SB_0}\big(&(\sigma^{y}_{SB_0}\otimes\id_{F_0}).\\
& (\E^\dagger_{BF}\otimes\text{id}_S)[N^b_{SB_1}\otimes\id_{F_1}]\big)\, .\nonumber
\end{align}
\end{subequations}
One verifies that $p(a,b|x,y)$ in Eq.~\eqref{eq:CorrGame} reads
\begin{equation}
    p(a,b|x,y)  = \tr \big ( \tilde M^{a|x}_E \otimes \tilde N^{b|y}_F \cdot \tau_{EF} \big )\,.\label{eq:p(abxy)Localizable}
\end{equation} 
{\bf Item (iii)} follows from Proposition~\ref{prop:LocconditionsGen}.
\end{proof}

Two remarks are now given. On the one hand, note that the converse of Corollary~\ref{prop:Locconditions} is not true. In particular, channels denoted as {\em local-limited} in~\cite{ChSteerNLbeyond_Hoban2018Games} always produce local correlations for any choice of input states and measurements. Examples are LOSE channels where Alice and Bob share an entangled state that cannot produce nonlocal correlations~\cite{werner1989quantum}. On the other hand, Eq.~\eqref{eq:p(abxy)Localizable} implies that a bipartite LOSE channel produces nonlocal correlations if the shared state $\tau_{EF}$ in Eq.~\eqref{eq:Localizable} is nonlocal in some Bell protocol. This in turn dictates that if the shared state in one realization of an LOSE channel is Bell nonlocal, the channel cannot be realized by some shared local state.

\subsection{A Buscemi-like characterization of LOSR operations}

The results of Buscemi~\cite{BuscemiNonlocHV_2012} extend to more general processes~\cite{schmid2020type}. For a fixed channel $\E$ and Bell functional $\gamma$, we denote as $\gamma_{\max}(\E)$ the maximal value of $\gamma(p)$ in Eq.~\eqref{eq:BellFunct} that can be achieved when the conditional probability distribution $p$ is obtained through Eq.~\eqref{eq:CorrGame} by a fixed set of input states $\rho^x$ and $\sigma^y$ over all possible measurements $M^a$ and $N^b$. That is,
\begin{equation}
    \gamma_{\max}(\E):=\max_{\{M^a,N^b\}}\sum_{xy,ab}\gamma_{ab,xy} p(a,b|x,y)\,.
    \label{eq:gammax}
\end{equation}

The setup depicted in Fig.~\ref{fig:DecoGames}~(b) can be seen as a particular case of so-called {\em semi-quantum games}, which are known to characterize convertibility of general processes through free operations in a generalized theory of nonlocality~\cite{schmid2020type}. In particular, the setup described in Eq.~\eqref{eq:CorrGame} and Fig.~\ref{fig:DecoGames}~(b) provides semi-quantum games characterizing local transformations between LOSR quantum channels. The following is a particular case of~\cite[Corollary 1]{schmid2020type}:

\begin{prop}\label{prop:BuscemiChannels}
    Given two bipartite quantum operations $\E_{AB}$ and $\E'_{A'B'}$, $\E_{AB}$ can be transformed into $\E'_{A'B'}$ by a convex combination of product superchannels if and only if for all Bell functionals $\gamma$ and for all sets of input states, $\gamma_{\max}(\E)\geq\gamma_{\max}(\E')$, where $\gamma_{\max}$ is given by Eq.~\eqref{eq:gammax}.
\end{prop}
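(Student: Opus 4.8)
The plan is to recognize Proposition~\ref{prop:BuscemiChannels} as the specialization, to the resource of bipartite channels under LOSR superchannels, of the general complete-monotone (Buscemi-type) theorem for the type-independent resource theory of local operations and shared randomness stated in~\cite[Corollary 1]{schmid2020type}. The bulk of the work is then to fix the dictionary between the abstract objects of that theorem and the concrete ones here: the resources are the bipartite channels $\E_{AB}$ and $\E'_{A'B'}$ (objects of ``channel type''); the free conversions are exactly the convex combinations $\sum_\lambda p_\lambda\,\Xi_A^\lambda\otimes\Xi_B^\lambda$ of product superchannels (LOSR at the level of superchannels); the semi-quantum games are the prepare-transform-measure protocols of Eq.~\eqref{eq:CorrGame} and Fig.~\ref{fig:DecoGames}~(b), a game being specified by a Bell functional $\gamma$ together with local input ensembles $\{\rho^x\}$, $\{\sigma^y\}$; and the value of a channel in a game is $\gamma_{\max}$ of Eq.~\eqref{eq:gammax}. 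Once this correspondence is in place the statement is precisely the ``wins every game at least as often $\iff$ free-convertible'' equivalence, so it remains only to verify the two implications.

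\textbf{Only-if direction.} First I would establish monotonicity of the game value under the free operations. Suppose $\E'=\sum_\lambda p_\lambda\,\Xi_A^\lambda\otimes\Xi_B^\lambda[\E]$, fix any game, and take measurements $M^a$, $N^b$ attaining $\gamma_{\max}(\E')$. Writing each product superchannel in its local pre-/post-processing form, I would absorb the post-processing into the measurements through the adjoint — exactly the manipulation of Eqs.~\eqref{eq:update-povmsLOSR} in Prop.~\ref{prop:LocconditionsGen} — and the pre-processing into the preparation of the quantum inputs, with the label $\lambda$ playing the role of classical shared randomness correlating the two wings. Since $\gamma_{\max}(\E)$ is a maximum over all local measurements and the quantifier ranges over all input ensembles, the resulting strategy is an admissible one for $\E$ of value $\gamma(p')=\gamma_{\max}(\E')$, and convexity of $\gamma_{\max}$ in the shared-randomness index closes the bound $\gamma_{\max}(\E)\ge\gamma_{\max}(\E')$. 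The point I would flag as requiring care is that the pre-processing genuinely modifies the input states, so the monotonicity has to be read as a statement over the whole family of games rather than at a single fixed ensemble; this is exactly how the value function is compared in~\cite{schmid2020type}.

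\textbf{If direction.} The hard part will be the converse, which I would prove by contraposition using a separation argument. The set $\mathcal{F}(\E):=\{\,\sum_\lambda p_\lambda\,\Xi_A^\lambda\otimes\Xi_B^\lambda[\E]\,\}$ of channels reachable from $\E$ is convex and, in fixed finite dimension, compact: the product superchannels form a compact set (closed conditions Eq.~\eqref{eq:superchannel}), their convex hull is compact by Carath\'eodory, and $\mathcal{F}(\E)$ is its image under the continuous linear map $\Xi\mapsto\Xi[\E]$. If $\E'\notin\mathcal{F}(\E)$, the separating hyperplane theorem yields an affine functional on Choi space strictly separating $J^{\E'}$ from $\mathcal{F}(\E)$. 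The step I expect to be the main obstacle is then to show that every such separating functional can be realized, up to a positive scaling and an additive constant, as the game-value map $\E\mapsto\gamma_{\max}(\E)$ for some Bell functional $\gamma$ and some tomographically complete input ensembles $\{\rho^x\}$, $\{\sigma^y\}$; tomographic completeness of the inputs is precisely what makes the map from linear functionals on Choi space to semi-quantum games surjective, while the maximization over measurements is what promotes a fixed linear functional to the value $\gamma_{\max}$. Granting this realizability, the separating functional supplies a game witnessing $\gamma_{\max}(\E')>\gamma_{\max}(\E)$, contradicting the hypothesis and forcing $\E'\in\mathcal{F}(\E)$. Since this realization-of-functionals-as-games is the genuine technical core already carried out in~\cite[Corollary 1]{schmid2020type}, in the final write-up I would either invoke that result directly or reproduce its construction specialized to the product-superchannel free operations above.
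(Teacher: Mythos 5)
Your only-if direction is essentially the paper's (absorb the product superchannels into preparations and measurements via the adjoint), and you correctly flag the quantifier subtlety caused by pre-processings modifying the input ensembles. The genuine gap is in your if direction: the separation you propose lives in the wrong space, and the ``realizability'' step it hinges on fails. The quantity $\gamma_{\max}$ of Eq.~\eqref{eq:gammax} is a maximum over measurements of functionals that are linear in $J^\E$; it is therefore convex but not affine in the channel, so a generic affine functional separating $J^{\E'}$ from $\mathcal{F}(\E)$ cannot be written, up to positive scaling and an additive constant, as $\E\mapsto\gamma_{\max}(\E)$. The weaker realization that \emph{is} available --- with tomographically complete inputs, the payoffs at a \emph{fixed} measurement span all linear functionals on Choi space --- only yields $\gamma(p_{\E'})>\gamma(p_{\E})$ at that one measurement, which is perfectly compatible with $\gamma_{\max}(\E)\geq\gamma_{\max}(\E')$, since the maximum on the left may be attained at a different measurement. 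So contraposition via Choi-space separation does not produce the witnessing game, and if you tried to carry out your plan literally it would stall exactly at this step.

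The paper's Appendix~\ref{app:ProofBuscemiChannels} runs the separation where it belongs: in the space of conditional distributions, per game. For fixed inputs, the distributions achievable from $\E$ with convex combinations of local measurements form a convex compact set, so the hypothesis ``for all $\gamma$'' is equivalent to the matching condition that for every measurement $Z$ applied to $\E'$ there exists $\overline{Z}$ in the convex hull of local measurements on $\E$ reproducing the distribution exactly, Eq.~\eqref{eq:appDistrGamesBuscemi1}. The genuine technical core --- which you mischaracterize as ``realization of functionals as games'' --- is the constructive simulation step that follows, à la Buscemi~\cite{BuscemiNonlocHV_2012}: take the inputs to be IC-POVMs applied to halves of maximally entangled states, take $Z$ to be generalized Bell measurements, and use the matching $\overline{Z}$ together with teleportation-correction unitaries to build explicit superchannels $\Lambda_i$, $\Phi_i$ satisfying $J'=\sum_i\nu_i\,\Lambda_i\otimes\Phi_i(J)$. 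This simulation lemma is what converts containment of distributions into channel convertibility, and it cannot be extracted from a hyperplane argument alone. Invoking \cite[Corollary~1]{schmid2020type} wholesale would be legitimate (the paper's main text does exactly that), but your stated fallback --- reproducing ``its construction'' as you describe it --- would not succeed, because the construction there is the teleportation one, not the functional-realization one you sketch.
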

\begin{proof}
Starting from the more general results in~\cite{schmid2020type}, the proof of the {\bf only if} direction can be easily seen as follows. If $\E$ can be converted into $\E'$ by a product of superchannels (or a convex combination of them), then we can absorb such product of superchannels in the state preparations and measurements of semiquantum games, described in Eq.~\eqref{eq:CorrGame} and depicted in Fig.~\ref{fig:DecoGames}~(b). Then it becomes clear that $\gamma_{\max}(\E)\geq\gamma_{\max}(\E')$ for all functionals $\gamma$. The proof for the 
{\bf if} direction is based on the observation that Eq.~\eqref{eq:CorrGame} gives a distribution after local superchannels are performed in the form of $\Xi_A\otimes\Xi_B[\E_{AB}]$, thus preserving the set of LOSR operations~\cite{schmid2020type}. For self-consistency of the paper, an explicit constructive proof is given in Appendix~\ref{app:ProofBuscemiChannels}.
\end{proof}
Similarly as in~\cite{BuscemiNonlocHV_2012}, this means that a quantum channel $\E$ is not LOSR if and only if for some Bell functional $\gamma$ and quantum inputs, the value $\gamma_{\max}(\E)$ in Eq.~\eqref{eq:gammax} cannot be attained with LOSR operations. 
The derivations in Appendix~\ref{app:ProofBuscemiChannels} provide explicit necessary and sufficient conditions to certify non-LOSR operations. Consider a set of fixed states $\{\rho^x_{RA_0}\}$ and $\{\sigma^y_{SB_0}\}$ that are linearly independent and span the whole input spaces of linear operators acting on $\mathcal{H}_{RA_0}$ and $\mathcal{H}_{SB_0}$, and assume $\mathcal{H}_{A_0}\approx\mathcal{H}_R$ and $\mathcal{H}_{B_0}\approx\mathcal{H}_S$. Let $\{\Psi^a_{RA_1}\}$ and $\{\Psi^b_{SB_1}\}$ be projective measurements on a generalized Bell basis and denote
\begin{align}
p_\E(ab|xy):=\tr(\Psi^a_{RA_1}&\otimes \Psi^b_{SB_1}\cdot \label{eq:FixBuscemiChannels}\\
\ (\E_{AB}&\otimes\text{id}_{RS})[\rho^x_{RA_0}\otimes\sigma^y_{SB_0}]).\nonumber
\end{align}
This setup allows to formulate the problem of whether or not a channel $\E_{AB}$ is LOSR as follows: A quantum channel $\E_{AB}$ is not an LOSR operation if and only if for any possible LOSR operation $\E'_{A'B'}$ and for any choice of measurements $M^a_{RA_1}$ and $N^b_{SB_1}$, the distribution $p_\E(ab|xy)$ of Eq.~\eqref{eq:FixBuscemiChannels} cannot be obtained via $\tr(M^a_{RA_1}\otimes N^b_{SB_1}(\E'_{AB}\otimes\operatorname{id}_{RS})[\rho^x_{RA_0}\otimes\sigma^y_{SB_0}])$. By absorbing the adjoint ${\E'_{AB}}^\dag$ in the measurements $M^a\otimes N^b$, namely defining $\tilde{M}^a\otimes \tilde{N}^b:=\E^\dag\otimes\text{id}(M^a\otimes N^b)$, we derive the following criterion, analogous to Buscemi's characterization of entangled states ~\cite{BuscemiNonlocHV_2012}.
\begin{cor}\label{cor:BuscemiChannels}
A bipartite quantum channel $\E_{AB}$ is not LOSR if and only if for some linear functional $\gamma$,
\begin{equation}
    \gamma(p_\E)>\max_{\tilde{M}^a,\tilde{N}^b}\sum_{ab,xy}\gamma_{ab,xy}\tr(\tilde{M}^a\rho^x)\tr(\tilde{N}^b\sigma^y)\,,
    \label{eq:gamma_pE}
\end{equation}
where we denote $\gamma(p_\E):=\sum_{ab,xy}\gamma_{ab,xy}p_\E(ab|xy)$.
\end{cor}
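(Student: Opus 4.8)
The plan is to read Corollary~\ref{cor:BuscemiChannels} as a separating-hyperplane statement. Write $\mathcal{P}_{\mathrm{LOSR}}$ for the set of distributions $p(ab|xy)=\tr(M^a_{RA_1}\otimes N^b_{SB_1}(\E'_{AB}\otimes\text{id}_{RS})[\rho^x_{RA_0}\otimes\sigma^y_{SB_0}])$ obtainable from some LOSR channel $\E'$ and some POVMs $\{M^a\},\{N^b\}$ with the fixed inputs. By the equivalence established just before the corollary (itself a consequence of Prop.~\ref{prop:BuscemiChannels} and the tomographic completeness of $\{\rho^x\},\{\sigma^y\}$, spelled out in Appendix~\ref{app:ProofBuscemiChannels}), the statement ``$\E_{AB}$ is not LOSR'' is equivalent to ``$p_\E\notin\mathcal{P}_{\mathrm{LOSR}}$''. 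The task thus reduces to converting the non-membership $p_\E\notin\mathcal{P}_{\mathrm{LOSR}}$ into the existence of a strictly separating functional $\gamma$, and to identifying $\max_{p\in\mathcal{P}_{\mathrm{LOSR}}}\gamma(p)$ with the right-hand side of Eq.~\eqref{eq:gamma_pE}.

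First I would check that $\mathcal{P}_{\mathrm{LOSR}}$ is convex and compact. Convexity follows from the convexity of LOSR (Eq.~\eqref{eq:LOSR}): two target distributions can be mixed by a single LOSR channel that uses shared randomness to flag which of the two product channels was applied and writes the flag into the outputs, after which a flag-reading measurement reproduces the convex mixture. Compactness follows since LOSR channels (with a Carath\'eodory bound on the number of terms in Eq.~\eqref{eq:LOSR}) and POVMs form compact sets and the distribution depends continuously on them. Given this, the separating-hyperplane theorem applied to the point $p_\E$ outside the compact convex set $\mathcal{P}_{\mathrm{LOSR}}$ yields a linear functional $\gamma$ with $\gamma(p_\E)>\max_{p\in\mathcal{P}_{\mathrm{LOSR}}}\gamma(p)$, which is the \textbf{only if} direction. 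The \textbf{if} direction is immediate, since any $\gamma$ with $\gamma(p_\E)$ exceeding the LOSR maximum forbids $p_\E\in\mathcal{P}_{\mathrm{LOSR}}$.

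It remains to evaluate the LOSR maximum, which is the crux. Since $\gamma$ is linear and $\mathcal{P}_{\mathrm{LOSR}}$ is the convex hull of distributions coming from extremal, i.e.\ product, LOSR channels $\E'=\E'_A\otimes\E'_B$, the maximum is attained on a product channel, for which the distribution factorizes as $\tr(M^a(\E'_A\otimes\text{id})[\rho^x])\,\tr(N^b(\E'_B\otimes\text{id})[\sigma^y])$. Absorbing the adjoints by setting $\tilde M^a:=({\E'_A}^\dagger\otimes\text{id})[M^a]$ and $\tilde N^b:=({\E'_B}^\dagger\otimes\text{id})[N^b]$ rewrites the two factors as $\tr(\tilde M^a\rho^x)$ and $\tr(\tilde N^b\sigma^y)$. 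The key point is that, as $\E'_A$ ranges over channels and $\{M^a\}$ over POVMs, the operators $\{\tilde M^a\}$ range over exactly all POVMs: positivity and the normalization $\sum_a\tilde M^a=\id$ are preserved because ${\E'_A}^\dagger$ is completely positive and unital, and any POVM is recovered by taking $\E'_A=\text{id}$. Hence the constrained maximization over $(\E'_A,\E'_B,\{M^a\},\{N^b\})$ collapses to the unconstrained maximization over POVMs $\tilde M^a,\tilde N^b$ appearing on the right-hand side of Eq.~\eqref{eq:gamma_pE}.

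The main obstacle I anticipate is not the algebra but the faithfulness underpinning Step~1: one must be sure that, with the fixed tomographically complete inputs and the generalized-Bell-basis measurements in Eq.~\eqref{eq:FixBuscemiChannels}, the map $\E\mapsto p_\E$ is injective enough that ``$\E$ not LOSR'' truly coincides with ``$p_\E$ not LOSR-reachable'' rather than merely implying it; this is where the spanning property of $\{\rho^x\},\{\sigma^y\}$ and the construction in Appendix~\ref{app:ProofBuscemiChannels} do the real work, and I would lean on them rather than re-derive process tomography. A secondary technical point is guaranteeing that $\mathcal{P}_{\mathrm{LOSR}}$ is closed so that the separation is \emph{strict} and the maximum in Eq.~\eqref{eq:gamma_pE} is genuinely attained.
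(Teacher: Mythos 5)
Your proposal is correct and follows essentially the same route as the paper: it reduces ``$\E_{AB}$ is not LOSR'' to non-membership of $p_\E$ in the set of LOSR-reachable distributions via Proposition~\ref{prop:BuscemiChannels} and the construction of Appendix~\ref{app:ProofBuscemiChannels}, absorbs the channel adjoints into the POVMs using that the adjoint of a CPTP map is completely positive and unital, and then separates with a linear functional. You merely make explicit two steps the paper leaves implicit in the derivation preceding the corollary --- the convexity/compactness of the LOSR-reachable set needed for strict separation, and the collapse of the maximization to extremal (product) channels --- both of which are handled correctly.
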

This provides a separation between the set of LOSR operations and the rest, which relies only in the input/outcome probabilities and the trusted set of input states $\rho^x$ and $\sigma^y$. Therefore, it constitutes a measurement-device-independent characterization of the set of LOSR operations. Note that the left hand side of Eq.~\eqref{eq:gamma_pE} is given by the set of input states and the channel under consideration, and the right hand side can be upper-bounded with semidefinite programming~\cite{berta2021semidefinite}.

\begin{figure*}[tbp]
    \centering
    \includegraphics{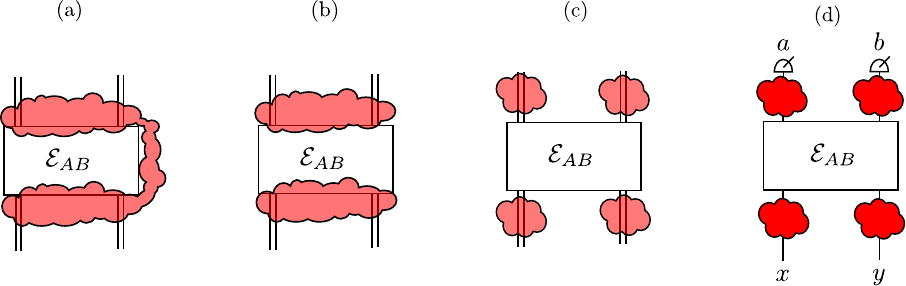}
    \caption{{\bf Models of decoherence occurring by dephasing noise under consideration.} (a) Dephasing occurs both in the inputs and the outputs as a superchannel $\Xi_{AB}[\E_{AB}]$. Its action on the Choi matrix is given by a Gram matrix $G_{A_0A_1B_0B_1}$ without any tensor product assumption through a bipartite Eq.~\eqref{eq:DephChan} with a bipartite structure. (b) Nonlocal dephasing occurs as independent pre- and post-processing by composition with a dephasing channel, $\E_{AB}^\mathcal{D}=\mathcal{D}^G\circ\E\circ\mathcal{D}^{G'}$. Its action on the Choi matrix is given by a product of Gram matrix $G_{A_0B_0}\otimes G_{A_1B_1}$ through Eq.~\eqref{eq:GramNoMemoryDeph} with a bipartite structure. (c) Local dephasing occurs at each party independently, according to Eq.~\eqref{eq:noise_local}. (d) Local complete dephasing occurs, where the input and output states are diagonal density matrices. The last scheme representing complete decoherence corresponds to Fig.~\ref{fig:DecoGames} (c), since local projective measurements in a fixed basis occur both before and after the channel is applied.}
    \label{fig:DecoModels}
\end{figure*}
\section{The role of dephasing noise}

\subsection{Gradual dephasing}
Quantum channels, and in particular the protocols described above to certify nonlocality, can suffer from decoherence. Here we ask how dephasing noise affects the nonlocal properties of a channel. Let us first consider local dephasing (Fig.~\ref{fig:DecoModels}~(c)), which is given by a local four partite Gram matrix of the form
\begin{equation}
G = G_{A_0} \otimes G_{B_0} \otimes G_{A_1} \otimes G_{B_1} \,.
\label{eq:noise_local}
\end{equation}

\begin{obs}\label{obs:LocDeph}
If a bipartite channel undergoing local dephasing noise, $\mathcal{D}_{A_1}\otimes\mathcal{D}_{B_1} \circ \E_{AB} \circ \mathcal{D}_{A_0}\otimes\mathcal{D}_{B_0}$, can produce correlations outside of \loc{} (Q, NS) from Eq.~\eqref{eq:CorrGame}, then the channel $\E_{AB}$ is not LOSR (LOSE, QNS).
\end{obs}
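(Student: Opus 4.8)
The plan is to prove the contrapositive: I will show that if $\E_{AB}$ belongs to LOSR (respectively LOSE, QNS), then so does the dephased channel $\E^{\mathcal D}_{AB}:=(\mathcal D_{A_1}\otimes\mathcal D_{B_1})\circ\E_{AB}\circ(\mathcal D_{A_0}\otimes\mathcal D_{B_0})$. Once this closure property is established, the statement follows immediately from Corollary~\ref{prop:Locconditions}: a channel in LOSR (LOSE, QNS) can only yield correlations in \loc{} (Q, NS) through Eq.~\eqref{eq:CorrGame}, so producing correlations outside these sets forces $\E_{AB}$ itself to lie outside LOSR (LOSE, QNS). The crux is therefore that local dephasing, being a product of single-party channels acting separately on the input and output of each party, cannot move a channel out of any of the three classes.

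For the LOSR case I would start from the extremal decomposition $\E_{AB}=\sum_\lambda p_\lambda\,\E_A^\lambda\otimes\E_B^\lambda$ of Eq.~\eqref{eq:LOSR}. Since $\mathcal D_{A_1}\otimes\mathcal D_{B_1}$ and $\mathcal D_{A_0}\otimes\mathcal D_{B_0}$ factorize across the $A|B$ cut, composing them with each product $\E_A^\lambda\otimes\E_B^\lambda$ distributes over the tensor product, giving $\E^{\mathcal D}_{AB}=\sum_\lambda p_\lambda\,(\mathcal D_{A_1}\circ\E_A^\lambda\circ\mathcal D_{A_0})\otimes(\mathcal D_{B_1}\circ\E_B^\lambda\circ\mathcal D_{B_0})$. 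Each factor is a composition of CPTP maps and hence CPTP, so this is again a convex mixture of products, i.e.\ an LOSR channel. For the LOSE case I would take the factorization of Eq.~\eqref{eq:Localizable} and absorb the local dephasing into Alice's and Bob's local operations: the output dephasing commutes with the partial trace $\tr_{E_1F_1}$ because they act on disjoint systems, and the input dephasing acts only on $A_0B_0$, not on the shared ancilla $\tau_{E_0F_0}$. Redefining $\E'_{AE}:=(\mathcal D_{A_1}\otimes\text{id}_{E_1})\circ\E_{AE}\circ(\mathcal D_{A_0}\otimes\text{id}_{E_0})$ and analogously $\E'_{BF}$, both CPTP, reproduces $\E^{\mathcal D}_{AB}$ with the \emph{same} shared state $\tau_{E_0F_0}$, so it is LOSE.

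The QNS case is where I expect the real work. Here I would pass to Choi matrices and use that local dephasing acts as a Schur product with a local Gram matrix, $J^{\E^{\mathcal D}}=J^{\E}\odot(G_{A_0}\otimes G_{B_0}\otimes G_{A_1}\otimes G_{B_1})$, the bipartite analogue of Eq.~\eqref{eq:GramNoMemoryDeph}. The key elementary fact is that every Gram matrix $G$ has unit diagonal (Eq.~\eqref{eq:DephChan}), whence two things happen: tracing out a system whose Gram factor is $G_X$ only probes the diagonal of $G_X$, which is all ones, so the factor $G_X$ drops out of the Schur product; and $\id_{X}\odot G_X=\id_X$. Using the first fact one gets $J^{\E^{\mathcal D}}_{A_0B_0B_1}=J^{\E}_{A_0B_0B_1}\odot(G_{A_0}\otimes G_{B_0}\otimes G_{B_1})$ and $J^{\E^{\mathcal D}}_{B_0B_1}=J^{\E}_{B_0B_1}\odot(G_{B_0}\otimes G_{B_1})$. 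Substituting the hypothesis $J^{\E}_{A_0B_0B_1}=\id_{A_0}\otimes J^{\E}_{B_0B_1}/d_{A_0}$ of Eq.~\eqref{eq:qNSab} and applying $\id_{A_0}\odot G_{A_0}=\id_{A_0}$ yields $J^{\E^{\mathcal D}}_{A_0B_0B_1}=\id_{A_0}\otimes J^{\E^{\mathcal D}}_{B_0B_1}/d_{A_0}$, which is precisely Eq.~\eqref{eq:qNSab} for $\E^{\mathcal D}$; the symmetric computation establishes Eq.~\eqref{eq:qNSba}. Hence $\E^{\mathcal D}_{AB}$ is QNS.

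The main obstacle is the bookkeeping in this QNS step: one must justify carefully the interplay between the partial trace and the entrywise Schur product, namely that tracing out a subsystem annihilates exactly its Gram factor because the diagonal entries equal one, and that the Schur product of the identity with any Gram matrix returns the identity. Everything else — the LOSR and LOSE closure, which are essentially formal once the dephasing is recognized as a product of local pre- and post-processings, and the final appeal to Corollary~\ref{prop:Locconditions} — is straightforward.
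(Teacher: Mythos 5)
Your proof is correct, but it takes a different route from the paper's. The paper disposes of this Observation in three lines by absorbing the dephasing into the \emph{devices} rather than the channel classes: since the Schur product with a Gram matrix preserves positive semidefiniteness, trace, and identity resolution, the input dephasings map the preparations $\rho^x_{A_0R},\sigma^y_{B_0S}$ to valid states and the adjoints of the output dephasings (Schur product with the conjugate Gram matrix, again a Gram matrix) map the POVMs $M^a,N^b$ to valid POVMs; hence any correlation obtained from $\E^{\mathcal D}_{AB}$ via Eq.~\eqref{eq:CorrGame} is also obtainable from the noiseless $\E_{AB}$ with modified devices, and Corollary~\ref{prop:Locconditions} applies to $\E_{AB}$ directly. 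You instead prove the dual closure statement, that LOSR, LOSE and QNS are each stable under local dephasing, and then apply Corollary~\ref{prop:Locconditions} to $\E^{\mathcal D}_{AB}$; all three of your case analyses are sound, including the QNS bookkeeping, where your two elementary facts --- $\tr_X\bigl(M\odot(G_X\otimes G_Y)\bigr)=(\tr_X M)\odot G_Y$ because $(G_X)_{xx}=1$, and $\id_X\odot G_X=\id_X$ --- are exactly what is needed to push Eqs.~\eqref{eq:qNSab} and \eqref{eq:qNSba} through the Schur product with the Gram factor of Eq.~\eqref{eq:noise_local}. Comparing the two: the paper's absorption argument is uniform across the three classes and actually proves something stronger and class-independent (the correlation set of the dephased channel, over all devices, is contained in that of the noiseless one), which is why the paper needs no case distinction; your closure argument yields a structural fact of independent interest (the three channel sets are invariant under local dephasing --- indeed your QNS computation could be shortcut by noting that pre- and post-composition with $\mathcal D_A\otimes\mathcal D_B$ is a special case of a product superchannel $\Xi_A\otimes\Xi_B$, which preserves all three sets, an observation the paper itself invokes in the proof of Proposition~\ref{prop:BuscemiChannels}), at the cost of a longer, case-by-case verification.
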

Indeed, note that the Schur product with a Gram matrix preserves both positive semidefiniteness, the trace, and identity resolution of any operators. Therefore, local dephasing is absorbed in the preparation and measurement of any protocol. This means that if we detect nonlocality on a channel which suffered from local dephasing, then we know that its noiseless version previous to dephasing is also nonlocal.

Let us now consider the case in which the two parties $A$ and $B$ sharing a quantum state $\rho_{AB}$ or implementing a quantum channel $\E_{AB}$ belong to a joint subsystem. In this case, dephasing can occur as a joint operation (Fig.~\ref{fig:DecoModels}~(a \& b)), and thus one might expect that nonlocal phenomena can occur.  This leads us to the following result.
\begin{prop}\label{prop:DecoEntanglement}
Dephasing noise acting jointly on a bipartite system has the following properties:
\begin{enumerate}
 \item A dephasing channel acting on a separable bipartite state can generate an entangled state.
 \item A dephasing superchannel acting on an LOSR operation can generate a nonlocal channel.
\end{enumerate}
\end{prop}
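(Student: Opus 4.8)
The plan is to establish both claims through explicit two-qubit examples built on a single observation: the dephasing map $\mathcal{D}^{G}(\cdot)=(\cdot)\odot G$ of Eq.~\eqref{eq:DephChan} contains, in the extreme no-damping regime $|G_{ij}|=1$, every conjugation by a diagonal unitary, since a rank-one Gram matrix $G=ww^{\dagger}$ with unit-modulus entries $w_i$ yields $\mathcal{D}^{ww^{\dagger}}(\rho)=U\rho U^{\dagger}$ for $U=\diag(w)$. Such a $\mathcal{D}^{ww^{\dagger}}$ is a bona fide dephasing channel --- it fixes all populations and only imprints phases on the coherences --- yet the diagonal unitary $U$ may be entangling across the bipartition. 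The canonical choice is $w=(1,1,1,-1)^{T}$, for which $U$ is the controlled-$Z$ gate.

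For item 1, I would apply $\mathcal{D}^{ww^{\dagger}}$ to the product --- hence separable --- state $\dyad{++}{++}$. Because the map is conjugation by controlled-$Z$, the output is the pure state $U\dyad{++}{++}U^{\dagger}$ with $U\ket{++}=\tfrac{1}{\sqrt{2}}(\ket{0}\ket{+}+\ket{1}\ket{-})$, which is maximally entangled; this already settles the claim. Should a strictly dissipative witness ($|G_{ij}|<1$) be preferred over a phase gate, I would add that the output has maximal entanglement and therefore stays entangled when the off-diagonal moduli of $G$ are lowered slightly below one, by continuity of the negativity.

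For item 2 I would transport the same mechanism to the Choi level via the correspondence of Fig.~\ref{fig:DecoModels}. The key structural fact is that an LOSR operation \eqref{eq:LOSR} has Choi matrix $J^{\E}=\sum_{\lambda}p_{\lambda}\,J^{\E_A^{\lambda}}\otimes J^{\E_B^{\lambda}}$, which is separable across $A_0A_1|B_0B_1$; contrapositively, an entangled Choi matrix certifies that a channel is non-LOSR, i.e.\ nonlocal. I would then take $\E=\mathrm{id}_{AB}=\mathrm{id}_A\otimes\mathrm{id}_B\in\mathrm{LOSR}$ and act with the no-memory dephasing superchannel of Fig.~\ref{fig:DecoModels}~(b) that leaves the input untouched and dephases the output by $G_{A_1B_1}=ww^{\dagger}$, i.e.\ post-composes with controlled-$Z$ on $A_1B_1$. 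By the bipartite form of Eq.~\eqref{eq:GramNoMemoryDeph} this produces a channel $\E'$ whose (unnormalized) Choi matrix is $\dyad{U}{U}$ with $\ket{U}=\sum_{ij}(-1)^{ij}\ket{ii}_A\ket{jj}_B$, where $\ket{ii}_A=\ket{i}_{A_0}\ket{i}_{A_1}$. Grouping on the $B$ index shows that $\ket{U}$ has Schmidt rank two across $A|B$ (the $A$-components $\ket{00}_A\pm\ket{11}_A$ are orthogonal and nonzero), so the Choi matrix is entangled and $\E'$ is non-LOSR, as required. (The partial transpose in the convention of Eq.~\eqref{eq:CJ} acts locally within $A$ and within $B$, hence does not affect entanglement across $A|B$.)

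The verifications I would relegate to routine are: that $ww^{\dagger}$ is positive semidefinite with unit diagonal; that Schur multiplication by a Gram matrix preserves positivity and the trace condition~\eqref{eq:TP}, so that $\E'$ remains CPTP (automatic here, since $\mathcal{D}^{ww^{\dagger}}$ is unitary conjugation); and that the pattern $G_{A_0B_0}\otimes G_{A_1B_1}$ is an admissible no-memory dephasing. The one step I would treat with genuine care is the logical core of item 2: the argument must invoke only the implication LOSR $\Rightarrow$ separable Choi and must not lean on its converse, which is subtle and in general false. Detecting nonlocality through entanglement of the Choi matrix is legitimate precisely because only this single direction is needed.
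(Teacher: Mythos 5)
Your proof is correct, and it reaches the proposition through essentially the same examples as the paper, with one genuinely different ingredient in item~2. For item~1 your construction is identical to the paper's: the paper applies a dephasing channel with exactly your Gram matrix $G=ww^{\dagger}$, $w=(1,1,1,-1)^{T}$, to $\rho=\dyad{++}{++}$, noting that $\mathcal{D}^{G}(\rho)=\rho\odot G=G/4=\dyad{\psi}{\psi}$ with $\ket{\psi}=(\ket{0+}+\ket{1-})/\sqrt{2}$, which maximally violates CHSH; your reading of $\mathcal{D}^{ww^{\dagger}}$ as conjugation by $\diag(w)$ (a controlled-$Z$) is the same computation in different clothing, and your continuity remark for $|G_{ij}|<1$ matches the paper's gradual-dephasing analysis around Eq.~\eqref{eq:DecoEntParameter}. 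For item~2 the example is again shared (the dephased identity channel, Fig.~\ref{fig:DecoModels}~(b)), but the certification differs: you invoke the implication $\mathrm{LOSR}\Rightarrow$ Choi matrix separable across $A_0A_1|B_0B_1$ and exhibit Schmidt rank two in $\ket{U}=\sum_{ij}(-1)^{ij}\ket{ii}_A\ket{jj}_B$, whereas the paper stays within its operational framework and argues that the dephased identity, fed the product input $\ket{++}$, outputs the entangled $\ket{\psi}$ and hence generates nonlocal correlations in the protocol of Eq.~\eqref{eq:CorrGame}, which item~(i) of Corollary~\ref{prop:Locconditions} forbids for LOSR channels. Your Choi-entanglement route is more algebraic and self-contained, and your care to use only the single implication is well placed --- the converse indeed fails (e.g.\ diagonal Choi matrices of signaling classical channels are separable, yet such channels are not LOSR). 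The paper's route buys the operationally stronger conclusion that the noisy channel actually produces CHSH-violating statistics, and the paper additionally supplies a second example outside your construction: a genuine dephasing superchannel in the sense of Fig.~\ref{fig:DecoModels}~(a), given by a $16\times16$ Gram matrix that converts the local unitary $\id\otimes\sigma_X$ into a unitary locally equivalent to C-NOT, showing that joint dephasing can even produce signaling channels (your $CZ$ conjugation happens to be signaling too, though you only certify non-LOSR, which is all the proposition requires).
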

At the level of quantum states, an extreme example can be constructed as follows: 
consider the states $\ket{\pm{}}=(\ket{0}\pm{}\ket{1})/\sqrt{2}$. Note that the projector $\rho_{AB}=\dyad{++}{++}$ has homogeneous entries $\rho_{ij}=1/4$, thus it evolves under dephasing noise $\mathcal{D}_{AB}^G$ in Eq.~\eqref{eq:noise} given by some Gram matrix $G$ as
\begin{equation}\label{eq:DephOn+}
    \mathcal{D}^G_{AB}(\rho_{A_0B_0})=\dyad{++}{++}\odot G=G/4\,.
\end{equation}
Therefore, it is enough to show that there exists a quantum state that is both maximally entangled and a Gram matrix. Indeed, the Bell state $\ket{\psi}=(\ket{0+}+\ket{1-})/\sqrt{2}$ has density matrix
\begin{equation}\label{eq:BellGram}
    \dyad{\psi}{\psi}=\frac{1}{4}
    \begin{pmatrix}
        \phantom{-}1 & \phantom{-}1 & \phantom{-}1 & -1 \\
        \phantom{-}1 & \phantom{-}1 & \phantom{-}1 & -1 \\
        \phantom{-}1 & \phantom{-}1 & \phantom{-}1 & -1 \\
        -1 & -1 & -1 & \phantom{-}1 \\
    \end{pmatrix}
\end{equation}
which defines a Gram matrix $G=4\dyad{\psi}{\psi}$ with entries $G_{ij}=\bra{v_i}v_j\rangle$ where $\ket{v_1}=\ket{v_2}=\ket{v_3}=-\ket{v_4}=(1,1,1,-1)^T/2$. Since $\ket{\psi}$ is maximally entangled, it maximally violates CHSH inequality,
\begin{equation}
     \langle \sigma_Z\otimes(\sigma_++\sigma_-)+\sigma_X\otimes(\sigma_+-\sigma_-) \rangle_\psi=2\sqrt{2}\, ,
\end{equation}
where $\sigma_X$ and $\sigma_Z$ are Pauli matrices and $\sigma_{\pm}=(\sigma_X\pm\sigma_Z)/\sqrt{2}$.
From a resource-theoretic perspective, this is possible because the state $\ket{++}:=\ket{+}\otimes\ket{+}$ maximizes the resource of coherence (with respect to the computational basis)~\cite{QuantifCoh_Baumgratz2014,ResCoh_Winter2016}, from which a dephasing channel can generate entanglement. 

At the level of quantum channels, this example can be seen as follows: consider the identity channel $\text{id}$, with the bipartite input state $\ket{++}$. Although the identity channel is an LOSR operation and thus produces probability distributions in local measurement protocols like the one of Eq.~\eqref{eq:CorrGame} (Fig.~\ref{fig:DecoGames}), dephasing induced by coupling degrees of freedom with the environment can convert the channel $\text{id}$ into a channel capable of generating nonlocal quantum probability distributions. This is an example where dephasing occurs as a superchannel with independent pre- and post-processing~\cite{Puchala_Dephasing2021,Supermaps_Chiribella2008} (Fig.~\ref{fig:DecoModels}~(b)). 

An example where dephasing is induced by a more general superchannel (Fig.~\ref{fig:DecoModels}~(a)) is as follows: consider the Choi matrix $J^{U_L}$ of the local unitary $U_{L}=\id\otimes\sigma_X$. Let the dephasing superchannel be described by a $16\times16$ Gram matrix $G=G^\dag\geq 0$ with nonzero coefficients $G_{kk}=1$, $G_{12,15}=-G_{2,5}=1$, and $G_{2,12}=G_{5,12}=-G_{2,15}=G_{5,15}=i$ (and their Hermitian adjoints), where $i$ is the imaginary unit. This dephasing leads to the Choi matrix $J^{U}=J^{U_L}\odot G$ which defines the channel implementing the following unitary,
\begin{equation}
    U=\sum_{j=0}^1\dyad{j}{j}\otimes\sigma_X(i\sigma_Z)^j\,.
\end{equation}
Observe that this unitary can be converted into the C-NOT gate by local unitaries as follows,
\begin{equation}
    J_{NOT}=(\id\otimes H\sigma_X) U (\id\otimes H)^\dag\, ,
\end{equation}
and therefore $U$ can produce extremal signaling probability distributions in the protocol in Fig.~\ref{fig:DecoGames}~(b) when decoherence depicted in Fig.~\ref{fig:DecoModels}~(a) occurs.

Let us highlight that the examples above constitute a special case of dephasing channels and superchannels occurring to highly coherent states and channels, whose coherence only suffer from phase factors. Although these examples are consistent with the mathematical model of dephasing operations established in~\cite{Puchala_Dephasing2021}, generic dephasing transforms pure states into mixed states, and unitary channels into irreversible channels~\cite{Nielsen_Chuang_2010}. Yet, these examples are illustrative for the following comparison: observation~\ref{obs:LocDeph} captures the intuitive fact that decohrence might make a nonlocal quantum channel undetectable through measurement protocols, thus giving a {\em false negative} result to a test of nonlocality. On the contrary, Propopsition~\ref{prop:DecoEntanglement} shows that decoherence can also give rise to {\em false positive} results if the parties $A$ and $B$ perform a measurement protocol undergoing nonlocal noise.

To better understand the role of gradual decoherence aside from extremal cases, we consider two parametric types of dephasing models. On the one hand, $\mathcal{D}_q$ damps all off-diagonal terms equally, $\mathcal{D}_q(\rho)=(1-q)\rho+q\diag(\rho)$. On the other hand, $\mathcal{D}'_p$ keeps all entries except for the coherences to the fourth population level, namely $\mathcal{D}'_p(\rho)_{ij}=(1-2p)\rho_{ij}$ if either $i=4\neq j$ or $j=4\neq i$.

Starting from the maximally coherent separable state $\rho=\dyad{++}{++}$, the composition of these two noises reads
\begin{equation}\label{eq:DecoEntParameter}
    \mathcal{D}_q\circ\mathcal{D}'_p(\rho)=\frac{1}{4}
        \begin{pmatrix}
        1 & 1-q & 1-q & x \\
        1-q &1 & 1-q & x \\
        1-q & 1-q & 1 &  x \\
        x & x & x & 1 \\
    \end{pmatrix}
\end{equation}
with $x=(1-q)(1-2p)$. The negativity of this state for the parameters $p$ and $q$ is shown in Fig.~\ref{fig:ContourNegativity}.
\begin{figure}
    \centering
    \includegraphics{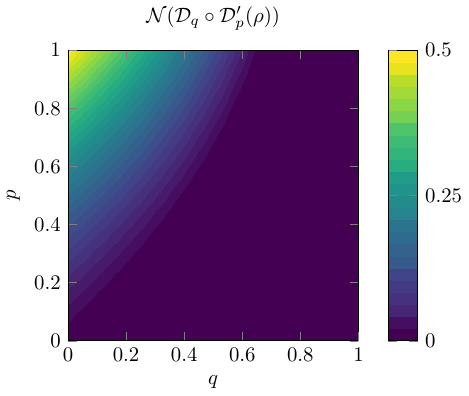}
    \caption{{\bf Entanglement generated by dephasing.} The parameters $p$ and $q$ quantify the strength of the dephasing channels $\mathcal{D}_q$ and $\mathcal{D}'_p$ acting on the maximally coherent state $\rho = \dyad{++}{++}$~\cite{QuantifCoh_Baumgratz2014,ResCoh_Winter2016}, which can be mapped to less coherent but entangled states. The color bar evaluates the negativity of the partial transpose of the resulting state, indicating maximal entanglement when taking the value $0.5$.}
    \label{fig:ContourNegativity}
\end{figure}

\subsection{Complete decoherence}
\begin{figure*}[!ht]
    \includegraphics[scale=0.27]{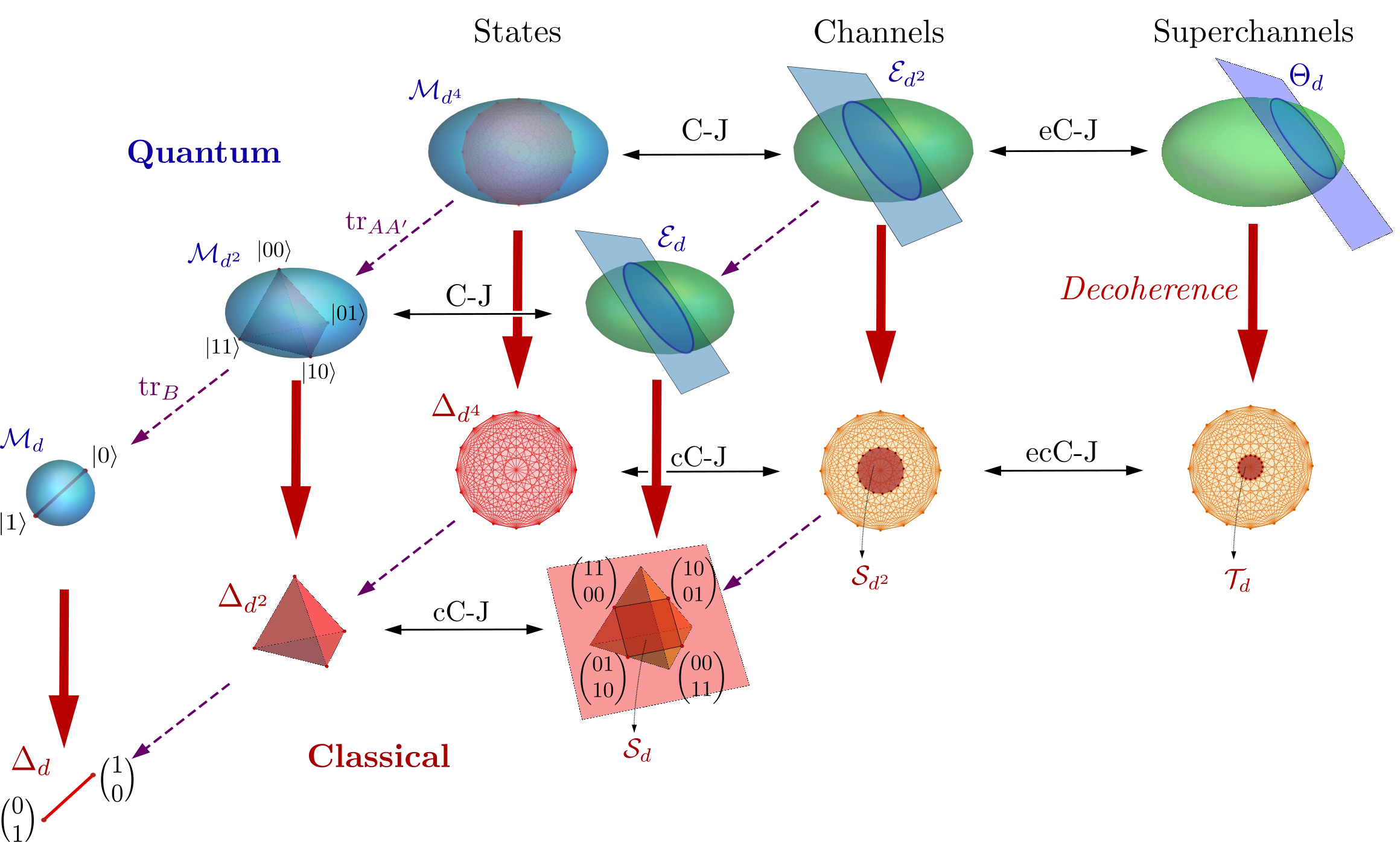}
    \caption{{\bf Complete decoherence of states, channels and superchannels.}  
    Blue convex sets in the left represent sets of quantum states consisting of $4$, $2$ and $1$ subsystems of size $d$ related by partial trace (directed dashed arrow). Bidirectional solid arrows denote the Choi-Jamio{\l}kowski isomorphism (C-J): 
    a cross section of the set $\mathcal{M}_{d^2}$
    defines  through Eq.~\eqref{eq:TP} the set
     ${\mathcal E}_d$ of
     quantum channels (blue section of the green convex set). An extension of the Choi-Jamio{\l}kowski isomorphism (eC-J) links the set of two-qudit channels ${\mathcal E}_{d^2}$  with 
     one-qudit superchannels 
     ${\Theta}_d$ 
     through Eq.~\eqref{eq:SuperchTPP}.
     Under decoherence (red thick arrows) these objects become classical polytopes depicted in red and contained inside larger orange polytopes:  
     a quantum state $\rho\in\mathcal{M}_d$ is mapped to a vector $p=\diag(\rho)\in\Delta_d$,
      a quantum channel $\E\in\mathcal{E}_{d}$ 
     is mapped into a stochastic matrix $S\in\mathcal{S}_d$,
       while 
      a quantum superchannel from $\Theta_d$ 
      is transformed  into an element
      of the set  $\mathcal{T}_d$
      of   classical supermaps. 
     The sets $\mathcal{M}_{d^2}$ of bipartite quantum states and $\mathcal{S}_{d^2}$ of bipartite
     stochastic matrices are both obtained from 
      the set
      $\mathcal{M}_{d^4}$ of four-partite states
     and inherit some of its properties.}
    \label{fig:SuperDeco}
\end{figure*}

Here we consider the case of complete decoherence depicted in Fig.~\ref{fig:SuperDeco}, where Choi matrices become diagonal. Recall that the set of stochastic matrices and the set of conditional probability distributions are isomorphic.  
For bipartite channels $\mathcal{E}_{AB}$, the decoherent action is represented by a bipartite stochastic matrix with entries
\begin{equation}\label{eq:CAchannel}
 S_{ab,xy}^{\mathcal{E}}=\bra{ab}\mathcal{E}(\dyad{xy}{xy})\ket{ab}\,,
\end{equation}
where $(a,b,x,y)\in [d_{A_1}] \times [d_{B_1}] \times [d_{A_0}] \times [d_{B_0}]$ define the canonical basis (Fig.~\ref{fig:SuperDeco}). Therefore, the decoherent action~\eqref{eq:CAchannel} defines probabilities $p(a,b|x,y)=S_{ab,xy}$ of obtaining classical outputs $ab$ conditioned to classical inputs $xy$, through the scheme of Fig.~\ref{fig:DecoGames} (c). In ~\cite[Propositions~3-5]{ChSteerNLbeyond_Hoban2018Games} it was shown that the sets of \loc{}, Q and NS correlations correspond to the sets of correlations that can be obtained respectively from LOSR, LOSE, and QNS channels as correlations in the inputs and outputs when all measurement bases are allowed. Since we are interested in dephasing noise, here we reformulate this result for a fixed basis
considering the decoherent action of Eq.~\eqref{eq:CAchannel}, as follows. 

\begin{prop}\label{prop:CAiffCorrSubs}
The following equivalences hold:
\begin{enumerate}[label=(\roman*)]
    \item The set of \emph{local (\loc{})} correlations coincides with the set of  decoherent actions of \emph{LOSR} channels.
    \item The set of \emph{quantum (Q)} correlations coincides with the set of  decoherent actions of \emph{LOSE} channels.
    \item The set of \emph{nonsignaling (NS)} correlations coincides with the set of decoherent actions of \emph{quantum nonsignaling} channels QNS. 
\end{enumerate}
\end{prop}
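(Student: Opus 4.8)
The plan is to prove each of the three equivalences by establishing two inclusions, and to observe first that the ``easy'' direction is already supplied by Corollary~\ref{prop:Locconditions}. The key remark is that the decoherent action of Eq.~\eqref{eq:CAchannel} is exactly the special case of the protocol Eq.~\eqref{eq:CorrGame} obtained by taking trivial ancillas $R,S$, computational-basis input states $\rho^x=\dyad{x}{x}$ and $\sigma^y=\dyad{y}{y}$, and projective measurements $M^a=\dyad{a}{a}$, $N^b=\dyad{b}{b}$, since then $\tr\big((\dyad{a}{a}\otimes\dyad{b}{b})\,\E(\dyad{xy}{xy})\big)=S^\E_{ab,xy}$. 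Hence the inclusions ``decoherent actions $\subseteq$ correlations'' follow immediately from Corollary~\ref{prop:Locconditions}: an LOSR channel yields an \loc{} action, an LOSE channel a Q action, and a QNS channel an NS action.

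For the reverse inclusions I would realize each admissible correlation as a decoherent action by an explicit measure-and-prepare construction. For (ii), given a quantum correlation $p(a,b|x,y)=\tr(M^{a|x}\otimes N^{b|y}\rho_{AB})$, I take the shared state $\tau_{E_0F_0}=\rho_{AB}$ and the local channels
\[
\E_{AE}(\xi)=\sum_{a,x}\dyad{a}{a}_{A_1}\,\tr\big[(\dyad{x}{x}_{A_0}\otimes M^{a|x}_{E_0})\,\xi\big],
\]
and the analogue $\E_{BF}$ built from the $N^{b|y}$. Each is CPTP because $\sum_a M^{a|x}=\id$ forces $\sum_{a,x}\dyad{x}{x}\otimes M^{a|x}=\id$, so the composite has the LOSE form of Eq.~\eqref{eq:Localizable}, and feeding $\dyad{xy}{xy}$ reproduces $p(a,b|x,y)$ as its decoherent action. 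Item (i) is the restriction of this to a separable $\rho_{AB}=\sum_\lambda p_\lambda\,\sigma_A^\lambda\otimes\xi_B^\lambda$, which turns the composite into the LOSR form of Eq.~\eqref{eq:LOSR}; equivalently, one embeds each local response function as a classical channel $\dyad{x}{x}\mapsto\sum_a p(a|x,\lambda)\dyad{a}{a}$, so that the decoherent action is exactly the decomposition Eq.~\eqref{eq:ProbSep}.

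For the reverse direction of (iii), given a nonsignaling $p(a,b|x,y)$ I define the fully classical channel $\E(\rho)=\sum_{xyab}p(a,b|x,y)\,\bra{xy}\rho\ket{xy}\,\dyad{ab}{ab}$, whose Choi matrix is diagonal with entries $p(a,b|x,y)$ and whose decoherent action is $p$ by construction. Positivity and the trace-preservation cross-section are automatic from $p\geq 0$ and $\sum_{ab}p(a,b|x,y)=1$. It remains to verify the QNS marginal conditions Eqs.~\eqref{eq:qNSab}--\eqref{eq:qNSba}: since the Choi matrix is diagonal, both sides of each condition are diagonal operators, and comparing diagonal entries reduces Eq.~\eqref{eq:qNSab} to the statement that $\sum_a p(a,b|x,y)$ is independent of $x$ and Eq.~\eqref{eq:qNSba} to the statement that $\sum_b p(a,b|x,y)$ is independent of $y$, which are precisely the nonsignaling conditions Eq.~\eqref{eq:ProbNS}.

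The main obstacle I expect is exactly this last correspondence in (iii): one must check carefully that, after reshaping $p$ into a diagonal Choi matrix, the full operator identities of the QNS conditions --- which a priori also constrain off-diagonal structure through the factors $\id_{A_0}$ and $\id_{B_0}$ and carry the normalizations $d_{A_0},d_{B_0}$ --- collapse to the scalar nonsignaling constraints on $p$ with all normalizations matching. Once this is established, the constructions for (i) and (ii) are routine, requiring only the verification that the measure-and-prepare maps are genuinely CPTP and lie in the LOSR and LOSE classes respectively.
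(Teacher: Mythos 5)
Your proof is correct, but it realizes the nontrivial (reverse) inclusions by a different construction than the paper. For the forward inclusions you observe that the decoherent action of Eq.~\eqref{eq:CAchannel} is the special case of Eq.~\eqref{eq:CorrGame} with trivial ancillas and computational-basis preparations and measurements, and then invoke Corollary~\ref{prop:Locconditions}; the paper instead re-derives this direction for (ii) by hand, assuming a unitary (Stinespring) realization $U:AR\to AR$, $V:BS\to BS$ of the LOSE channel and reading off effects $E^{a|x}=U_x^\dagger\dyad{a}{a}U_x$, $F^{b|y}=V_y^\dagger\dyad{b}{b}V_y$ with $U_x=U\ket{x}$, $V_y=V\ket{y}$. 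For the reverse inclusion of (ii) the paper dilates projective measurements $P^{a|x},Q^{b|y}$ to unitaries and builds a LOSE channel from controlled unitaries $U=\sum_x\dyad{x}{x}\otimes U_x$, $V=\sum_y\dyad{y}{y}\otimes V_y$ acting on the shared state; you instead use measure-and-prepare local channels $\E_{AE}(\xi)=\sum_{a,x}\dyad{a}{a}\,\tr[(\dyad{x}{x}\otimes M^{a|x})\xi]$, which is more elementary: it works directly for general POVMs (no implicit Naimark/Stinespring step, whereas the paper tacitly restricts to projective $P^{a|x},Q^{b|y}$), and the CPTP check reduces to $\sum_{a,x}\dyad{x}{x}\otimes M^{a|x}=\id$. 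What the paper's controlled-unitary route buys is a realization whose form matches its converse argument (effects extracted from the same unitaries), making the two directions of (ii) manifestly symmetric. For (i) both proofs specialize (ii) to a separable shared state, and your alternative embedding of the response functions as classical channels matches Eq.~\eqref{eq:ProbSep} directly. For (iii) the paper only remarks that the QNS conditions on a diagonal Choi matrix are equivalent to the NS conditions; your explicit diagonal-Choi construction and the entrywise verification that Eqs.~\eqref{eq:qNSab}--\eqref{eq:qNSba} collapse to Eq.~\eqref{eq:ProbNS} (with the $1/d_{A_0}$, $1/d_{B_0}$ normalizations checked to be consistent) fills in exactly the detail the paper leaves implicit, and the step you flag as delicate is handled correctly.
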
 
Proposition~\ref{prop:CAiffCorrSubs} can be obtained from~\cite[Prop. 3, 4 \& 5]{ChSteerNLbeyond_Hoban2018Games}, by absorbing the basis choice into the existence conditions. A qualitative analysis of the channel nonlocality that is witnessed by the nonlocality of such correlations is carried out in a different work \cite[Prop. 1]{Rico2024ourshort}. For self-consistency, an explicit constructive proof is given in Appendix~\ref{app:CAiffCorrSubs}. 
This provides a simple criterion to detect certain classes of quantum channels, from local measurements in the computational basis: one certifies that a quantum channel $\E_{AB}$ is outside of the set of LOSR (LOSE, QNS) channels if its decoherent action $S_{AB}^{\mathcal{E}}$ is outside of the set of \loc{} (Q, NS) correlations. 
This criterion can be implemented in decoherent setups, as it involves only measurements in the computational basis. 

Let us remark that the converse is not true: there exist non-LOSR quantum channels with local decoherent action. As an example, consider the 2-qubit channel with unitary (single) Kraus operator $U = \sum_{i,j=0}^1\ket{\phi_{ij}}\bra{i+j,j}$, which maps the computational basis to the Bell states $\ket{\phi_{ij}}=\sigma_X^i\sigma_Z^j\otimes\id(\ket{00}+\ket{11})/\sqrt{2}$ where $\sigma_X$ and $\sigma_Z$ are Pauli matrices. Even though $U$ can generate entanglement, its decoherent action $S^{U}=U\odot U^*=\frac{1}{2}\left(\id^{\otimes 2} + \sigma_X^{\otimes 2}\right)$ is local. However, note that this criterion can be enhanced with any operations that preserve the set of LOSR. In the example above, the freedom of local measurement basis considered in~\cite{ChSteerNLbeyond_Hoban2018Games} is enough: the bipartite unitary $U$ can be transformed by local unitaries to $\tilde{U}=(H\otimes H)U(\id\otimes H)$, where $H$ is the Hadamard gate with $H\ket{0}=\ket{+}$ and $H\ket{1}=\ket{-}$. We have $S^{\tilde{U}}=\tilde{U}\odot \tilde{U}^*=\dyad{0}{0}+\dyad{1}{3}+\dyad{2}{2}+\dyad{3}{1}$ which defines extremal nonsignaling correlations.

Note that the nonsignaling conditions establish equality constraints, and thus define a measure zero subset of channels. 
Moreover, the sets of local and nonsignaling correlations form a polytope. Therefore, detecting channels with nonlocal decoherent action can be done with a linear program, asking whether there exists a convex combination of extremal local stochastic matrices decomposing the decoherent action. For local correlations, finitely-many Bell games~\cite{HViffCHSH2222_Fine1982,Class50years_Rosset2014,BrunnerReviewBellNonloc_2014} equivalently define the faces of their polytope. Bounding the set of quantum correlations is a more involved problem and relies on a convergent hierarchy~\cite{navascues2008convergent}. This problem has been tackled through analytical solutions for the smallest setting \cite{tsirel1987quantum, landau1988empirical, masanes2005extremal}, conjectured generalizations \cite{GeoQCorr_Goh2018,mikosnuszkiewicz2023extremal} and additional results using semidefinite programming techniques \cite{thinh2019geometric, thinh2023quantum}. Outer approximations are given by the maximum quantum value of Bell games~\cite{cirel1980quantum}.

By Proposition~\ref{prop:CAiffCorrSubs}, Bell functionals on the decoherent action of a quantum channel can be used to construct linear nonlocality witnesses, similar to entanglement witnesses~\cite{OtfriedED}. This mechanism is illustrated in Fig.~\ref{fig:SubsetsDeco}.
Let $\mathcal L$ be one of the three subsets of bipartite quantum channels: LOSR, LOSE, and QNS. Let $S$ be the corresponding set of bipartite probability distributions under the correspondence in Proposition~\ref{prop:CAiffCorrSubs}, namely, \loc{}, Q, and NS. Consider
\begin{equation}\label{eq:WitnessBell}
W_\mathcal L=\gamma_S/(d_{A_0}d_{B_0})\id-\Omega  \, ,
\end{equation}
where $\Omega$ is defined from the Bell functional through
\begin{equation}\label{eq:OmegaCHSH}
    \Omega=\sum_{x,y,a,b=0}^1 \gamma_{ab,xy}\dyad{xayb}{xayb}_{A_0A_1B_0B_1} 
\end{equation}
and $\gamma_{S}$ denotes its maximum value on the subset $S$. Thus, a channel $\E_{AB}$ with Choi matrix $J^\E$ and decoherent action $S^\E$ is outside $\mathcal L$ if
\begin{equation}\label{eq:CnlMeasure}
    \tr(J^\E W_\mathcal L) = \gamma_S - \sum \gamma_{ab,xy}S^\E_{ab,xy} < 0\,.
\end{equation}
For the CHSH functional, $\gamma_{ab,xy}=(-1)^{a+b}(-1)^{xy}$, it holds that $\gamma_{\operatorname{\loc{}}} = 2$, $\gamma_{\operatorname{Q}} = 2\sqrt{2}$ and $\gamma_{\operatorname{NS}} = 4$ \cite{tsirel1987quantum, BrunnerReviewBellNonloc_2014}.

By the isomorphism of Proposition~\ref{prop:CAiffCorrSubs}, any bipartite stochastic matrix in a distinguished subset of correlations  can be obtained from decoherence of a quantum channel in a corresponding subset. An explicit example is the convex hull of bipartite stochastic matrices (namely, bipartite probability distributions)  $pR +(1-p)S$ with
\begin{equation}
R  = \frac{1}{2}\begin{pmatrix}
1 & 1 & 1 & 0 \\
0 & 0 & 0 & 1 \\
0 & 0 & 0 & 1 \\
1 & 1 & 1 & 0 
\end{pmatrix} ,\,  S  = \frac{1}{2}\begin{pmatrix}
0 & 0 & 0 & 1 \\
1 & 1 & 1 & 0 \\
1 & 1 & 1 & 0 \\
0 & 0 & 0 & 1 \\
\end{pmatrix}
\label{eq:cross_section}
\end{equation}
which attains the maximal quantum value $\gamma_Q=2\sqrt{2}$ of CHSH for $p=(1+1/\sqrt{2})/2$. 
In Fig.~\ref{fig:cross_section} we show a two-dimensional slice of the polytope of distributions consisting of the convex hull of $R$, $S$ and $\id$ containing different subsets.  The sets of local (\loc{}) and nonsignaling (NS) distributions are bounded by linear programming, while the set of quantum (Q) distributions is approximated by the first two levels of the Navascu\'{e}s-Pironio-Ac\'{i}n (NPA) hierarchy~\cite{navascues2008convergent}. The subsets L, Q and NS of distributions are obtained by decoherence of the subsets LOSR, LOSE and QNS of channels.
\begin{figure}[tbp]
\centering
\includegraphics[width=0.35\textwidth]{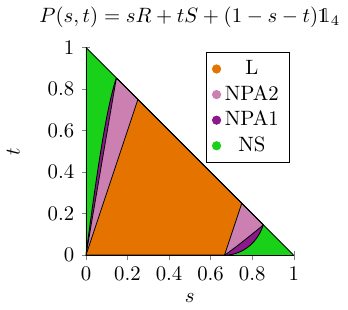}
\caption{{\bf Partial cross section of the space of bipartite distributions} spanned by $\id_4$, $R$ and $S$ in Eq.~\eqref{eq:cross_section}, $P(s, t) = sR + tS + (1-s-t)\id_4$ where $s+t \leq 1$. The distribution $\id_4$ is local, while $R$ and $S$ are Popescu-Rohrlich (PR) correlations~\cite{popescu1994quantum}, which are extremal nonsignaling. The orange region corresponds with distributions that are local, while dark and light purple regions denote first (NPA1) and second (NPA2) level of Navascu\'{e}s-Pironio-Ac\'{i}n (NPA) approximation \cite{navascues2008convergent} to the set $Q$ of quantum distributions (not plotted). Green region depicts the remaining nonsignaling distributions. Each of these subsets of distributions can be obtained by LOSR, LOSE and QNS channels under complete decoherence.}\label{fig:cross_section}
\end{figure}

\section{Conclusions and open questions}

We infer nonlocal properties of bipartite quantum channels based on the bipartite stochastic matrices produced by them under measurement protocols. 
In particular, our approach allows us to bound the sets of local operations assisted by shared randomness (LOSR), local operations assisted by shared entanglement (LOSE) and quantum nonsignaling channels (QNS), from the conditional  probabilities of local measurement outcomes.
This framework is suitable to experimental verification of nonlocality of bipartite channels from local measurements, even in the presence of complete decoherence. 
In contrast, we show that both entanglement and nonlocality of bipartite quantum channels can be enhanced under nonlocal gradual decoherence.

    \begin{figure} [h]
    \vspace{5mm}
    \centering
        \includegraphics{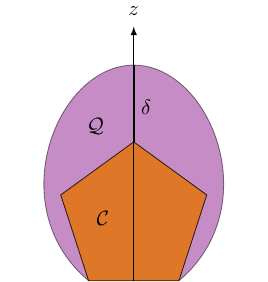}
        \caption{
{\bf First glimpse at the structure of multidimensional  convex sets  of {\em quantum} $(\mathcal Q)$ and {\em classical} $(\mathcal C)$}:
 states,  operations and  correlations. The transition from quantum to classical is induced by decoherence.
    The same figure can also represent sets of quantum algorithms or quantum technologies.
    The distance $\delta$ between the extremal points of both sets along 
the vertical axis $z$, denoting a selected utility function,
 represents the {\sl quantum advantage.}
    }
      \label{fig:states}
    \end{figure}

This work  relates the geometry of the set of bipartite quantum states, channels, superchannels, and their classical counterparts: probability vectors, stochastic matrices and classical superchannels.  In particular, several measures for nonlocality in the space of correlations have been proposed in the literature \cite{pironio2003violations, van2005statistical, barrett2005nonlocal, cabello2005how}. Through our framework, these naturally give insight into the geometry of quantum operations~\cite{karol1998volume}. At the same time, our work leaves several research lines open:
\begin{enumerate}

        \item It is unknown whether local-limited and quantum-limited channels as defined in~\cite{ChSteerNLbeyond_Hoban2018Games} can be detected by making use of a shared entangled state through Proposition~\ref{prop:LocconditionsGen}.

        \item   Corollary~\ref{cor:BuscemiChannels} provides a first step to understanding the set of functionals witnessing non-LOSR operations, analogous to entanglement witnesses for nonseparable states. However, further research is needed to better understand the performance of such witnesses in practice.
\end{enumerate}

The analysis presented for the case of complete decoherence highlights the role 
of Choi--Jamio{\l}kowski isomorphism 
to analyse geometric features of 
quantum correlations and 
channel nonlocality.
The standard form of the isomorphism
relates bipartite states with quantum operations, while its generalized version links 
four-partite quantum states with bipartite quantum channels. Due to decoherence the latter suffer a transition to bipartite stochastic matrices ${\cal S}_{d^2}$, which describe correlations between both systems. Therefore, the overall geometric
structure of the sets of classical and quantum states, channels and correlations is rooted in the geometric structure of the set ${\cal M}_{d^4}$ four-partite states (Fig.~\ref{fig:SuperDeco}).

Any generalized 
quantum theory can be recovered from basic principles that force
the corresponding set of states  
to be convex \cite{mielnik1974generalized}.
Hence, if one looks into the set of density matrices
from a large perspective,
one realizes that it 
forms a convex set 
\cite{Mielnik:1980hq}. 
A closer look 
reveals its convex subset of separable states, 
defined as classical mixtures of product states 
-- see Fig. \ref{fig:states}.
A related structure is characteristic 
also to the space of quantum maps and quantum correlations.
A similar schematic
picture can also be used to
describe the sets of quantum algorithms 
and quantum technologies and to visualize the notion of a 
{\em quantum advantage}:
the difference between the utility function describing, for instance,  
efficiency of a given protocol, 
optimized over all quantum and classical scenarios. 

\bigskip

{\em Acknowledgements:\,\,} We are thankful to Remigiusz Augusiak, Francesco Buscemi, Dariusz Chru\'{s}ci\'{n}ski, Jakub Czartowski, Felix Huber, Kamil Korzekwa, Miguel Navascu\'{e}s, Roberto Salazar, Anna Sanpera and Mario Ziman for discussions and for pointing to references. Support by the Foundation for Polish Science through TEAM-NET project  POIR.04.04.00-00-17C1/18-00
and by NCN QuantERA
Project No. 2021/03/Y/ST2/00193 and ERC Advanced Grant TAtypic, project number 101142236 is gratefully acknowledged. FS acknowledges projects DESCOM VEGA 2/0183/2 and DeQHOST APVV-22-0570.

\begin{widetext}
    
\appendix
\setcounter{secnumdepth}{1}

\section{Proof of Proposition~\ref{prop:BuscemiChannels}}\label{app:ProofBuscemiChannels}
Here we will prove the {\bf if} part of Proposition~\ref{prop:BuscemiChannels}. Namely, we will show that if $\E$ is more nonlocal than $\E'$ for all the games specified in Fig.~\ref{fig:DecoGames} (b), then it can be transformed to $\E'$ through some superchannels $\Xi=\sum\nu_i\Xi_i^A\otimes\Xi_i^B$.
\begin{proof}
We will follow a similar reasoning to the analogous result of Buscemi for bipartite quantum states~\cite{BuscemiNonlocHV_2012}. To this end, consider the extended Hilbert spaces 
\begin{align}
\mathcal{H}_{E_0}&=\mathcal{H}_{A_0}\otimes\mathcal{H}_{R_0}=\mathcal{H}_{A'_0}\otimes\mathcal{H}_{R'_0}\quad\text{and}\\
\mathcal{H}_{F_0}&=\mathcal{H}_{B_0}\otimes\mathcal{H}_{S_0}=\mathcal{H}_{B'_0}\otimes\mathcal{H}_{S'_0}.
\end{align}
Note that the input and output dimensions of the channels $\E_{AB}$ and $\E'_{A'B'}$ in consideration can in general be different. Therefore, the subsystems $E_0$, $R_0$ and $R'_0$ ($F_0$, $S_0$, and $S'_0$) shall be chosen accordingly such that the dimensions match. In particular, we set the dimension $d_{E_0}$ ($d_{F_0}$) to be a common multiple of the input dimensions $d_{A_0}$ and $d_{A'_0}$ ($d_{B_0}$ and $d_{B'_0}$). Namely, there exist natural numbers $n$, $n'$, $m$ and $m'$ so that 
\begin{align}
    d_{E_0}&=nd_{A_0}=n'd_{A'_0}\quad\text{and}\\
    d_{F_0}&=md_{B_0}=m'd_{B'_0}\,.
\end{align} 
Thus, the input states $\rho^x\in\mathcal{M}_{d_{E_0}}$ and $\sigma^y\in\mathcal{M}_{d_{F_0}}$ are regarded in different bipartitions when they are used as input states for the different channels $\E$ and $\E'$. For example, suppose we consider a channel $\E_{AB}$ with $d_{A_0}=4$ and a channel $\E'_{A'B'}$ with $d_{A'_0}=6$. Then we can consider an input state $\rho^x\in\mathcal{M}_{12}$, which can be seen either as a bipartite state $\rho^x_{A_0R_0}$ with $d_{A_0}=4$ and $d_{R_0}=3$, or as a bipartite state $\rho^x_{A'_0R'_0}$ with $d_{A'_0}=6$ and $d_{R'_0}=2$.

In a similar manner, consider the convex combinations of local measurements in the output systems $E_1$ and $F_1$,
$Z^{a,b}_{E_1F_1}=\sum_ip_i {M^{a,i}}_{E_1}\otimes {N^{b,i}}_{F_1}$. Following the same arguments in the proof of Proposition 1 of \cite{BuscemiNonlocHV_2012}, notice that the probability distributions available for a particular game with different measurements form a convex set. Therefore, $\gamma(\E)\geq\gamma(\E')$ for all $\gamma$ is equivalent to assuming for any measurement $Z$ of the above form  there exists a measurement $\overline{Z}$ in the convex hull of local measurements $\overline{M}_{E_1}\otimes\overline{N}_{F_1}$ such that
\begin{equation}\label{eq:appDistrGamesBuscemi1}
p(ab|xy)
=
\tr\Big [ \overline{Z}^{a,b}_{E_1F_1}\cdot\text{id}_{RS}\otimes\E_{AB}(\rho^x_{E_0}\otimes\sigma^y_{F_0}) \Big ] 
=
\tr\Big [ Z^{a,b}_{E_1F_1}\cdot\text{id}_{R'S'}\otimes\E'_{A'B'}(\rho^x_{E_0}\otimes\sigma^y_{F_0}) \Big ] \,.
\end{equation}
Define now the input states as
\begin{subequations}
\begin{align}
\rho^x_{E_0}&:=\tr_{E_0^c}[\Theta^x_{E_0^c}\otimes\id_{E_0}\cdot\phi^+_{E_0^cE_0}]\quad\text{and}\\
\sigma^y_{F_0}&:=\tr_{F_0^c}[\Upsilon^y_{F_0^c}\otimes\id_{F_0}\cdot\phi^+_{F_0^cF_0}]\, ,
\end{align}
\end{subequations}
with extended Hilbert spaces $\mathcal{H}_{E_0^c}\approx\mathcal{H}_{E_0}$ and $\mathcal{H}_{F_0^c}\approx\mathcal{H}_{F_0}$, where $\{\Theta^x\}_x$ and $\{\Upsilon^y\}_y$ are informationally complete, positive operator-valued measurements (IC-POVMs) and $\phi^+_{E_0^cE_0}$ is the density matrix of the Bell state $\ket{\phi^+}_{E_0^cE_0}=1/\sqrt{d_{E_0}}\sum_{i=0}^{d_{E_0}-1}\ket{i}_{E_0^c}\ket{i}_{E_0}$ (and similarly for $\phi^+_{F_0^cF_0}$). For convenience we will denote $\E_{EF}:=\text{id}_{RS}\otimes\E_{AB}$ and $\E'_{EF}:=\text{id}_{R'S'}\otimes\E'_{A'B'}$. Thus, we can write Eq.~\eqref{eq:appDistrGamesBuscemi1} as
\begin{align}
    p(ab|xy)&=\tr\Big [ \Theta^x_{E_0^c}\otimes\overline{Z}_{E_1F_1}\otimes\Upsilon^y_{F_0^c}\cdot\E_{EF}\otimes\text{id}_{E_0^cF_0^c}(\phi^+_{E_0E_0^c}\otimes\phi^+_{F_0F_0^c}) \Big ]\nonumber\\
    &=\tr\Big [ \Theta^x_{E_0^c}\otimes Z_{E_1F_1}\otimes\Upsilon^y_{F_0^c}\cdot\E'_{EF}\otimes\text{id}_{E_0^cF_0^c}(\phi^+_{E_0E_0^c}\otimes\phi^+_{F_0F_0^c}) \Big ]\,.
\end{align}
Now divide the full trace above as $\tr_{E_0^cF_0^c}[\tr_{E_1F_1}(\cdots)]$. By assumption, $\{\Theta^x\}_x$ and $\{\Upsilon^y\}_y$ are IC-POVMs and therefore form a basis for the linear operators acting on the Hilbert spaces $\mathcal{H}_{E_0^c}$ and $\mathcal{H}_{F_0^c}$. As a result, for the partial trace over $E_1$ and $F_1$, we have the following 
\begin{align}
    &\tr_{E_1F_1}\Big [ \id_{E_0^c}\otimes\overline{Z}^{ab}_{E_1F_1}\otimes\id_{F_0^c}\cdot\E_{EF}\otimes\text{id}_{E_0^cF_0^c}(\phi^+_{E_0E_0^c}\otimes\phi^+_{F_0F_0^c}) \Big ]=\nonumber\\
    &\tr_{E_1F_1}\Big [ \id_{E_0^c}\otimes Z^{ab}_{E_1F_1}\otimes\id_{F_0^c}\cdot\E'_{EF}\otimes\text{id}_{E_0^cF_0^c}(\phi^+_{E_0E_0^c}\otimes\phi^+_{F_0F_0^c}) \Big ]\,.
\end{align}
Let us now recover the notation $E_1=A_1R_1=A'_1R'_1$ and $F_1=B_1S_1=B'_1S'_1$ while noticing that $R_1=R_0$, $R'_1=R'_0$ and the same for $S$. Thus, we can write
\begin{align}
    &\tr_{\stackrel{\scriptstyle A_1R_0}{B_1S_0}}\Big [ \id_{A_0^cR_0^c}\otimes\overline{Z}^{ab}_{\stackrel{\scriptstyle A_1R_0}{B_1S_0}}\otimes\id_{B_0^cS_0^c}\cdot\underbrace{\E_{AB}\otimes\text{id}_{RS}\otimes\text{id}_{\stackrel{\scriptstyle A_0^cR_0^c}{B_0^cS_0^c}}\left(\phi^+_{A_0A_0^c}\otimes\phi^+_{B_0B_0^c}\otimes\phi^+_{R_0R_0^c}\otimes\phi^+_{S_0S_0^c}\right)}_{J_{A_1B_1A_0^cB_0^c}\otimes\phi^+_{R_0R_0^c}\otimes\phi^+_{S_0S_0^c}} \Big ]=\label{eq:appLHSteleport}\\
    &\tr_{\stackrel{\scriptstyle A'_1R'_0}{B'_1S'_0}}\Big [ \id_{{A'}_0^c {R'}_0^{c}}\otimes Z^{ab}_{\stackrel{\scriptstyle A'_1R'_0}{B'_1S'_0}}\otimes\id_{{B'}_0^c{S'}_0^c}\cdot \underbrace{\E'_{A'B'}\otimes\text{id}_{R'S'}\otimes\text{id}_{\stackrel{\scriptstyle {A'}_0^c{R'}_0^{c}}{{B'}_0^c{S'}_0^c}}\left(\phi^+_{A'_0{A'}_0^c}\otimes\phi^+_{B'_0{B'}_0^c}\otimes\phi^+_{R'_0{R'}_0^{c}}\otimes\phi^+_{S'_0{S'}_0^c}\right)}_{J'_{A'_1B'_1{A'}_0^c{B'}_0^c}\otimes\phi^+_{R'_0{R'}_0^{c}}\otimes\phi^+_{S'_0{S'}_0^c}} \Big ]\,.\label{eq:appRHSteleport}\nonumber
\end{align}
On each side of the equality, we identify the Choi states $J$ and $J'$ tensored by two pairs of Bell states $\phi^+$. This allows for a physical interpretation in terms of teleportation similar to~\cite{BuscemiNonlocHV_2012}, with the difference that now two parts out of the four-partite Choi state $J'$ are teleported. That is, here Alice holds the systems $A'_1$, $B'_1$, $R'_0$ and $S'_0$ and Bob holds the systems ${A'}_0^{c}$, ${B'}_0^{c}$, ${R'}_0^{c}$ and ${S'}_0^{c}$. Using two shared Bell states, $\phi^+_{R'_0{R'}_0^c}$ and $\phi^+_{S'_0{S'}_0^c}$, Alice teleports her parts $A'_1$ and $B'_1$ of the shared Choi state $J'$ to Bob's parts ${R'}_0^c$ and ${S'}_0^c$.

By assumption, for any $Z^{ab}$ there exists a $\overline{Z}^{ab}$ such that the equality  Eq.~\eqref{eq:appLHSteleport} holds. In particular, it holds when Alice applies a Bell measurement $\{Z^{ab}_{A'_1B'_1R'_0S'_0}\}=\{\phi^{ab}_{A'_1R'_0}\otimes\phi^{ab}_{B'_1S'_0}\}$ to send her side to Bob, where $\phi^{ab}$ is the density matrix of each generalized Bell state $\ket{\phi^{ab}}=\sum_{k=0}^{d-1}e^{bk\frac{2\pi i}{d}}\ket{k,k+a}/\sqrt{d}$~\cite{Teleport_Bennett1993}. 
In this case, Bob can apply unitaries $\{ U^a_{{R'}_0^c}\otimes V^b_{{S'}_0^c}\otimes\id_{{A'}^{c}_0{B'}^{c}_0}\}$  to recover $J'$ in his side. Strictly speaking,
\begin{align}
    J'_{{A'}_0^c{B'}_0^cA'_1B'_1}=\sum_{a,b}&(U^a_{{R'}_0^c}\otimes V^b_{{S'}_0^c}\otimes\id_{{A'}_0^c{B'}_0^c})\\ \nonumber
    &\tr_{\stackrel{\scriptstyle A_1R_0}{B_1S_0}}\Big [ \id_{A_0^cR_0^c}\otimes\overline{Z}^{ab}_{\stackrel{\scriptstyle A_1R_0}{B_1S_0}}\otimes\id_{B_0^cS_0^c}\cdot J_{A_1B_1A_0^cB_0^c}\otimes\phi^+_{R_0R_0^c}\otimes\phi^+_{S_0S_0^c}\Big ](U^a_{{R'}_0^c}\otimes V^b_{{S'}_0^c}\otimes\id_{{A'}_0^c{B'}_0^c})^\dagger,
\end{align}
where we inserted the left-hand side of Eq.~\eqref{eq:appLHSteleport} as the (unnormalized) state Bob obtains for each outcome $a$ and $b$ of Alice. 
Finally, expanding $\overline{Z}^{ab}=\sum_i\nu_i\overline{M}^{a,i}_{A_1R_0}\otimes\overline{N}^{b,i}_{B_1S_0}$ with $v_i\geq 0$ and $\sum_i\nu_i=1$, one gets the following maps
\begin{align}
    \Lambda_i
    (\tau_{A_1A^c_0})&:=\sum_a(U^a_{{R'}_0^c}\otimes\id_{{A'}^{c}_0})\cdot\tr_{A_1R_0}\big [ \id_{A_0^cR_0^c}\otimes\overline{M}^{a,i}_{A_1R_0}\cdot\phi^+_{R_0R_0^c}\otimes\tau_{A_1A^c_0} \big ]\cdot({U^a_{{R'}_0^c}}\otimes\id_{{A'}^{c}_0})^\dag\quad\text{and}\\
    \Phi_i    (\theta_{B_1B^c_0})&:=\sum_b(V^b_{{S'}_0^c}\otimes\id_{{B'}^{c}_0})\cdot\tr_{B_1S_0}\big [ \id_{B_0^cS_0^c}\otimes\overline{N}^{b,i}_{B_1S_0}\cdot\phi^+_{S_0S_0^c}\otimes\theta_{B_1B^c_0} \big ]\cdot({V^b_{{S'}_0^c}}\otimes\id_{{B'}^{c}_0})^\dag\,.
\end{align}
One can verify that $\tr_{{R'}^c_0} [\Lambda_i(\tau_{A_1 A_0^c})] = \id_{{A'}^c_0}$ if $\tr_{A_1}(\tau_{A_1A_0^c})=\id$, 
thus each $\Lambda_i$ is a quantum superchannel. 
Similarly, each $\Phi_i$ is a superchannel. 
It is now clear that the Choi matrix $J'$ can be obtained through local operations assisted by shared randomness from $J$,
\begin{equation}
    J'_{{A'}_0^c{B'}_0^cA'_1B'_1}=\sum_i\nu_i\Lambda_i\otimes\Phi_i(J_{A_1B_1A_0^cB_0^c}),
\end{equation}
which means that the channel $\E'$ can be obtained from the channel $\E$ by a convex combination of products of superchannels.
\end{proof}

\section{Proof of Proposition~\ref{prop:CAiffCorrSubs}}
\label{app:CAiffCorrSubs}
First we show the second item (ii). Let $p(a,b|x,y) = \tr (\sigma P^{a|x} \otimes Q^{b|y})$ determine a quantum probability distribution, where $\sigma$ is a quantum state on $RS$, $\{P^{a|x}\}_a$ projective measurements over $R$ and $\{Q^{b|y}\}_b$ over $S$. Going to a Stinespring's dilation \cite[Theorem 6.9]{holevo2019quantum}, there exist unitary maps $U_x: AR \rightarrow AR$ and $V_y : BS \rightarrow BS$ such that
\begin{equation}
P^{a|x} = \bra a U_x \ket 0 \quad \& \quad
Q^{b|y} = \bra b V_y \ket 0 \, .
\end{equation}
Take $U : CAR \rightarrow CAR$ and $V : DBS \rightarrow DBS$ unitaries on some control systems $C$ and $D$,
\begin{equation}
U = \sum_x \ket x \bra x \otimes U_x \quad \& \quad 
V = \sum_y \ket y \bra y \otimes V_y \, .
\end{equation}
These define a LOSE channel $\mathcal E: CD \rightarrow AB$ via
\begin{equation}
\mathcal E (\rho) = \tr_{CDRS} [U \otimes V (\rho \otimes \ketbra{00}{00}\otimes \sigma) U^\dagger \otimes V^\dagger] \,
\end{equation}
whose decoherent action gives back $p(a,b|x,y)$. Indeed,
\begin{align}
& \bra{ab} U \otimes V (\ketbra{xy}{xy} \otimes \ket{00}) = \ket{xy} \otimes P^{a|x} \otimes Q^{b|y} \, ,
\end{align}
thus $\tr [\ketbra{ab}{ab} \mathcal E(\ketbra{xy}{xy})]$ reads
\begin{align}
& \tr[\ketbra{xy}{xy} \otimes P^{a|x} \otimes Q^{b|y} \sigma] = p(a,b|x,y) \, .
\end{align}

Conversely, assume $p(a,b|x,y)$ can be obtained as the decoherent action of a LOSE channel, $\tr[(\ketbra{ab}{ab} \otimes \id) U\otimes V (\ketbra{xy}{xy} \otimes \sigma) U^\dagger \otimes V^\dagger]$, where $U:AR \rightarrow AR$ and $V: BS \rightarrow BS$ are unitaries and $\sigma$ a state on $RS$. With the isometries
\begin{equation}
U_x = U\ket x \quad \& \quad V_y = V\ket y \, 
\end{equation}
we can define quantum effects 
\begin{equation}
E^{a|x} = U_x^\dagger \ketbra{a}{a} U_x \quad \& \quad 
F^{b|y} = V_y^\dagger \ketbra{b}{b} V_y \, ,
\end{equation}
that reproduce the probabilities $p(a,b|x,y)$. Indeed,
\begin{equation}
\tr[U_x^\dagger \otimes V_y^\dagger \ketbra{xy}{xy} U_x \otimes V_y \sigma] = \tr[E^{a|x} \otimes F^{b|y} \sigma] \, .
\end{equation}

Item (i) follows from the above derivations by considering $\sigma$ to be a separable state. Item (iii) can be seen by realizing that the QNS conditions on a diagonal Choi matrix are equivalent to the NS conditions on the probability distribution given by the decoherent action of the channel.

\end{widetext}
\bibliographystyle{quantum_abbr}
\bibliography{Bibliography}

\end{document}